\newtheorem{fact}{Fact}
\newtheorem{problem}{Problem}
\theoremstyle{plain}
\newtheorem{theorem}{Theorem}[section]
\newtheorem{lemma}[theorem]{Lemma}
\theoremstyle{definition}
\newtheorem{definition}[theorem]{Definition}
\theoremstyle{remark}
\newtheorem{remark}[theorem]{Remark}
\newtheorem*{theorem*}{Theorem}
\newtheorem*{lemma*}{Lemma}
\newtheorem*{fact*}{Fact}
\newtheorem*{proposition*}{Proposition}
\newtheorem*{corollary*}{Corollary}
\newtheorem*{hypothesis*}{Hypothesis}
\newtheorem*{conjecture*}{Conjecture}
\theoremstyle{definition}
\newtheorem*{definition*}{Definition}
\newtheorem*{construction*}{Construction}
\newtheorem*{example*}{Example}
\newtheorem*{question*}{Question}
\newtheorem*{assumption*}{Assumption}
\newtheorem*{problem*}{Problem}
\newtheorem*{model*}{Model}
\theoremstyle{remark}
\newtheorem{claim}[theorem]{Claim}
\newtheorem*{claim*}{Claim}
\newtheorem*{remark*}{Remark}
\newtheorem*{observation*}{Observation}
\colorlet{darkgreen}{green!50!black}
\newcommand{\defcal}[1]{\expandafter\newcommand\csname c#1\endcsname{{\mathcal{#1}}}}
\newcounter{ct}
    \edef\letter{\Alph{ct}}
\DeclareMathOperator{\poly}{poly}
\newcommand{\E}{{\mathbb{E}}}
\newcommand{\sgraph}[1]{\ensuremath{\mathcal{SG}
\ifthenelse{\equal{#1}{}}{}{(#1)}
}}
\newcommand{\cgraph}[1]{\ensuremath{\mathcal{CG}
\ifthenelse{\equal{#1}{}}{}{(#1)}
}}
\newcommand{\cpath}[2]{\ensuremath{P_{#1}
\ifthenelse{\equal{#2}{}}{}{(#2)}
}}
\newcommand{\OPT}{{\ensuremath{\mathrm{OPT}}}\xspace}
\newcommand{\cost}{{\ensuremath{\mathrm{cost}}}\xspace}
\newcommand{\con}{{\ensuremath{\mathrm{consistency}}}\xspace}
\newcommand{\clcst}{{\ensuremath{\mathrm{clcost}}}\xspace}
\newcommand{\id}{{\ensuremath{\mathrm{id}}}\xspace}
\newcommand{\C}{\mathcal{C}}
\newcommand{\stcon}{{\ensuremath{\mathrm{str{-}consistency}}}\xspace}
\newcommand{\swcost}{{\ensuremath{\mathrm{swcost}}}\xspace}
\newcommand{\fol}{{\ensuremath{\mathrm{Fol}}}\xspace}
\newcommand{\dav}{{\ensuremath{d_{\mathrm{av}}}}\xspace}
\newcommand{\ball}{{\ensuremath{\mathrm{Ball}}}\xspace}
\newcommand{\vol}{{\ensuremath{\mathrm{Vol}}}\xspace}
\newcommand{\polylog}{\poly\log}
\newcommand{\eps}{\varepsilon}
\title{Clustering with Label Consistency}
\author{Diptarka Chakraborty}{National University of Singapore}{diptarka@comp.nus.edu.sg}{}{}
\author{Hendrik Fichtenberger}{Google Research, Zürich, Switzerland}{fichtenberger@google.com}{}{}
\author{Bernhard Haeupler}{INSAIT, Sofia University "St. Kliment Ohridski", Bulgaria \and
ETH Zürich, Switzerland}{bernhard.haeupler@inf.ethz.ch}{}{}
\author{Silvio Lattanzi}{Google Research, Barecelona, Spain}{silviol@google.com}{}{}
\author{Ashkan Norouzi-Fard}{Google Research, Zürich, Switzerland}{ashkannorouzi@google.com}{}{}
\author{Ola Svensson}{EPFL, Lausanne, Switzerland}{ola.svensson@epfl.ch}{}{}
\authorrunning{D.~Chakraborty, H.~Fichtenberger, B.~Haeupler, S.~Lattanzi, A.~Norouzi-Fard, and O.~Svensson} %TODO mandatory. First: Use abbreviated first/middle names. Second (only in severe cases): Use first author plus 'et al.'
\keywords{$k$-center, $k$-median, consistent clustering, approximation algorithms} %TODO mandatory; please add comma-separated list of keywords
\begin{document}

\maketitle
\begin{abstract}
    Designing efficient, effective, and consistent metric clustering algorithms is a significant challenge attracting growing attention. Traditional approaches focus on the stability of cluster centers; unfortunately, this neglects the real-world need for stable point labels, i.e., stable assignments of points to named sets (clusters). In this paper, we address this gap by initiating the study of label-consistent metric clustering. We first introduce a new notion of consistency, measuring the label distance between two consecutive solutions. Then, armed with this new definition, we design new consistent approximation algorithms for the classical $k$-center and $k$-median problems. %Our theoretical contributions are complemented by experimental results demonstrating the practical effectiveness of our methods.
\end{abstract}

\section{Introduction}

Clustering is a fundamental problem in unsupervised learning with wide-ranging real-world applications. Its core principle involves grouping similar objects together while separating dissimilar ones. Over the years, numerous clustering formulations have been explored. In this paper, we focus on two classic metric clustering problems, specifically the $k$-center and $k$-median problems. Formally, given a set of points $P$ from a metric space $\mathcal{X}$, let $d(i,j)$ denote the distance between $i, j \in P$. For a set of centers $C \subseteq P$, and an assignment of points from $P$ to $C$, denoted by $\mu$, the $k$-center cost is defined as $\cost_{cen}(P, (C, \mu)) = \max_{j \in P} d(\mu(j),j)$ and the cost is $\cost_{med}(P, (C, \mu)) = \sum_{j \in P} d(\mu(j),j)$ for the $k$-median problem.
    %\item $k$-means: $\cost_{mean}(P, (C, \mu)) = \sum_{j \in P} d(\mu(j),j)^2.$
In both problems, one is interested in finding $k$ cluster centers and an assignment that minimize the respective costs.

Thanks to their simple and elegant formulation, the $k$-center and $k$-median problems have been extensively studied throughout the years. Efficient approximation algorithms are known for them \cite{charikar1999constant,jain2001approximation,ahmadian2019better,byrka2017improved} and also streaming \cite{charikar2003better}, sliding window \cite{braverman2015clustering, braverman2016clustering, borassi2020sliding, epasto2022improved, woodruff2023near} and dynamic algorithms have been proposed \cite{chan2018fully, HenzingerK20, bateni2023}. Despite all this work, until recently, not much was known about the stability and consistency provided by different clustering solutions. Notably, even small variations in input data could lead to vastly different clustering outputs across previously cited methods. This lack of consistency poses a serious concern for real-world applications where inputs evolve over time, and downstream processes depend on reliable, high-quality clustering.

In response to this issue, the past few years have seen the development of several new methods for consistent clustering \cite{lattanzi2017consistent, cohen2022online, fichtenberger2021consistent, cohen2019fully, bhattacharya2022efficient, lkacki2024fully, guo2021consistent, bhattacharya2024fully, forster2025dynamic, bhattacharya2025fully}. Interestingly, previous work has focused on algorithms with good center consistency (where cluster centers change smoothly) but offers no significant guarantees on label consistency (the stability of individual point assignments). However, many practical scenarios demand label consistency. For example, consider trust and safety applications where clustering is used to detect coordinated attacks. In such a scenario, stable labels, i.e., stable assignments of points to the same \emph{named} set, are crucial for reliable and consistent detection. Similarly, in machine learning pipelines, clustering often provides labels for training data; in this setting, label stability is essential for consistent model predictions. Finally, consider ontology or taxonomy tasks; also, in this scenario, one is interested in having a consistent label for each object. In all of these scenarios, it is common that points are treated based on the cluster they belong to. Therefore, a stable assignment to the same cluster guarantees a consistent treatment of points.

The motivation of this work is the aforementioned importance of \emph{label} consistency and the observation that current solutions do not provide any guarantees on label consistency. In fact, it is not hard to construct instances where algorithms that do not focus on label consistency change the label of several nodes needlessly. 
A simple example illustrating this point is the following $k$-median instance where points are in a single dimension and $k=2$. There are $10$ points at position $-2$, $1000$ points at $0$,  $10$ points at position $2$, and a single point at position $100$.  It is not hard to see that the optimal solution for this instance is to select a center in position $0$ and a center in position $100$. Now, suppose we add another set of $1000$ points in position $3$. To have a good solution with good label consistency, one should create a new cluster containing the new points and the initial center at $1000$ (with a center at position $3$) and leave untouched the old cluster around $0$. Unfortunately, this is not the case; algorithms focusing only on center consistency will reassign several points in the original cluster to the new one (specifically the points at $2$), creating instability in the solution. 
This is a simple example that illustrates that it may be suboptimal to assign points to their closest centers when optimizing for label consistency.
Another potential issue with center consistency is that the consistency of all centers are of ``equal'' importance and not weighted by the size of the respective clusters, which is a more relevant (weighted) measure in many applications.
%\Onote{I am not sure I completely understand this example, e.g. why is the point at $(0, 1.99)$ and not at $(0,2)$?}\Snote{So that few points around the first center are closer to the new center but the gain is small so they shouldn't move. If it is not convincing we should try to come up with another example....}

% maintain good center consistency but 
%\Dnote{Maybe can add the following somewhere}
%The label consistent clustering problem considered in this paper essentially involves assigning weights to old centers based on the number of points in their corresponding clusters. The goal is to minimize the weighted center swaps while keeping costs low. This differs from the center consistency approach, which aims to minimize unweighted center swaps while keeping costs low.

\paragraph*{Label Consistent Clustering} To address these issues, we introduce the notion of label consistency in the incremental setting. In particular, given two sets of points $P_1$ and $P_2$, where $P_2$ is obtained from $P_1$ by adding points, and a clustering of $P_1$, we are interested in computing a high-quality clustering of $P_2$ with good label consistency.

More formally, given two set of points $P_1$ and $P_2$, where $P_1 \subseteq P_2$, and a solution $\C_1=(C_1,\mu_1)$ for $P_1$, the goal is to compute a high-quality solution $\C_2=(C_2,\mu_2)$ for $P_2$ (with respect to the cost function) consistent to $\C_1$. We define \emph{switching cost} or \emph{inconsistency} between $\C_1$ and $\C_2$, denoted by $\swcost(\C_1,\C_2)$, as the number of points in $P_1$ that are assigned to a different center by $\mu_1$ and $\mu_2$, i.e.,
$
\swcost(\C_1,\C_2) := \left| \left\{ i \in P_1 \mid \mu_1(i) \neq \mu_2(i) \right\} \right|.
$
The \emph{consistency} between $\C_1$ and $\C_2$ is defined as $\con(\C_1,\C_2):=|P_1| - \swcost(\C_1,\C_2)$.

We are now ready to formally define our new label consistent clustering problem.
\begin{definition}[Label Consistent Clustering Problem]
    \label{def:conscluster}
    Given two instances $P_1$ and $P_2$ of the $k$-center (median) problem, where $P_1 \subseteq P_2$, a solution $\C_1 = (C_1,\mu_1)$ of the instance $P_1$, and a budget $S$ on the switching cost, the objective is to find a solution $\C_2 = (C_2,\mu_2)$ for $P_2$ that minimizes the objective function $\cost(P_2,\C_2)$
    while satisfying the constraint $\swcost(\C_1,\C_2) \le S$, i.e.,
    \[
    \arg \min_{\C:\swcost(\C_1,\C) \le S} \cost(P_2,\C).
    \]
    Further, we use $\C_2^*$ to denote an optimum solution, and we let $\OPT = \cost(P_2, \C_2^*)$ denote the value of an optimal solution when the input is clear from the context.
\end{definition}

% It is worth noticing that in the above definition, the maximum over $\cost(P_2,\C)$ and $\cost(P_1,\C_1)$ is taken to capture the setting in which the initial given solution $\C_1$ of $P_1$ has a high cost. In such a setting, it is unreasonable to ask to efficiently find a lower cost solution to the second instance $P_2$ with more points while maintaining a low switching cost. 

We remark that the label consistent $k$-center and $k$-median problems generalize the basic variants (by setting $P_1 = \emptyset$), and so they are hard to approximate better than a factor of $2$~\cite{HochbaumShmoys86} and $1+2/e$~\cite{GUHA1999228}, respectively. This motivates us to find an efficient approximation algorithm for the problem. \Cref{fig:example} depicts the difference between cluster consistency and label consistency. %\Cref{sec:app-objective}.

\begin{figure}[H]
    \centering
    \includegraphics[width=10cm]{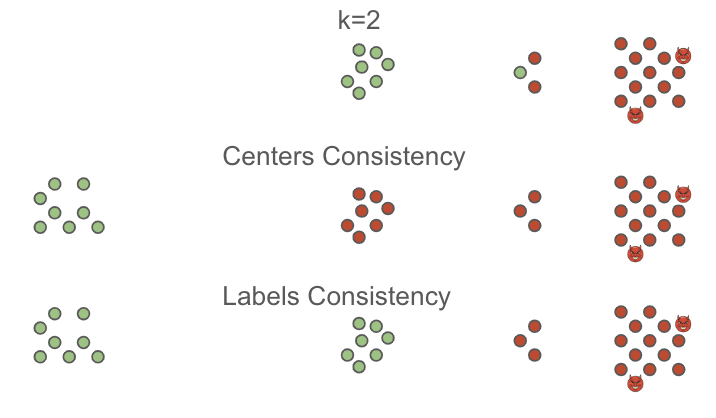}
    \caption{In the above example, we show different effects of the label or center consistency constraints. In this setting $k=2$ and the initial set of points is the one at the top. A good clustering for the $k$-center and $k$-median objective is the one represented by green and red colors at the top. Note that in the red cluster we have some bad actors, and so an algorithm that uses this clustering is likely to label all red points as bad. Suppose now that a new clustering is added and we require an algorithm to be $1$-center consistent or $1$-label consistent. The $1$-center consistent algorithm will completely change the labels of the initial points and mark all the input points as malicious. The 1-label consistent algorithm will instead just change the label of one point (the point for which it is more certain) and maintain the initial labels for other points. Thus, label consistency is providing a more stable and reliable solution.}
    \label{fig:example}
\end{figure}

\paragraph*{Our contribution and techniques}
The main contribution of our paper is to design approximation algorithms for the Label Consistent Clustering Problems. We begin with the classical $k$-center problem and provide a constant-factor approximation to its label-consistent variant.

\begin{restatable}{theorem}{kcenter}
    \label{thm:k-center}
    There is an $O(n^2 + kn \log n)$-time $6$-approximation algorithm for the label consistent $k$-center clustering problem, where $n$ is the number of points in the input instance.
\end{restatable}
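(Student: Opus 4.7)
The plan is to use the standard trick for $k$-center: enumerate the $O(n^2)$ candidate values of $\OPT$ (each is a pairwise distance in $P_2$), and for each such radius $r$ run a combinatorial feasibility subroutine that either produces a solution of cost at most $6r$ with $\swcost \le S$, or declares $r < \OPT$. The smallest $r$ for which the subroutine succeeds yields the $6$-approximation; binary search reduces the enumeration to $O(\log n)$ trials. All of the genuine work sits in the feasibility subroutine.

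The subroutine will exploit the following structural observation about the optimum. Let $\C_2^* = (C_2^*, \mu_2^*)$ be optimal with cost $r^* = \OPT$, let $K^* := C_1 \cap C_2^*$, and write $R_c(r) := \{p \in \mu_1^{-1}(c) : d(p, c) \le r\}$ for the points in the old cluster of $c$ that lie within $r$ of $c$. I can argue that it is without loss of generality to set $\mu_2^*(p) = c$ for every $c \in K^*$ and every $p \in R_c(r^*)$: making this change cannot increase the cost (since $d(p,c) \le r^*$) and can only reduce $\swcost$. Consequently the non-switched points of the optimum are exactly $\bigcup_{c \in K^*} R_c(r^*)$, whose size is at least $|P_1|-S$.

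Given a guessed $r$, the algorithm I have in mind first computes $|R_c(r)|$ for every $c \in C_1$ and sorts $C_1$ in non-increasing order of this quantity. It then inserts old centers into $C_2$ along the sorted order until the cumulative value reaches $|P_1|-S$; denote this prefix by $K$. If $|K| > k$, the guess $r$ is declared infeasible. Otherwise it fills $C_2$ up to size $k$ by a Gonzalez-style farthest-point procedure on $P_2$, and finally sets $\mu_2(p) := \mu_1(p)$ whenever $\mu_1(p) \in K$ and $d(p, \mu_1(p)) \le r$, assigning every remaining point to its nearest center in $C_2$. By the structural observation, the top prefix of the sort has at least as much cumulative retention as $K^*$, so $|K| \le |K^*|$; hence the Gonzalez step gets at least $k-|K^*|$ slots, i.e., as many as the ``new'' centers in $\C_2^*$. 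The switching bound is immediate from the construction of $K$, and the time complexity splits into $O(n^2)$ for all pairwise distances plus $O(kn\log n)$ over the $O(\log n)$ binary-search iterations.

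The hard step will be bounding the cost by $6r$. Because the prefix $K$ is chosen to maximize retention rather than geometric coverage, $K$ and $K^*$ can differ, so the Gonzalez extension starts from a geometrically different initial set than the one the optimum effectively uses. I expect the analysis to proceed by a charging argument: for each ``missing'' $c^* \in K^* \setminus K$, a pigeonhole over the $k$ centers of $\C_2^*$ against the $k$ centers produced by our algorithm will show that some center of $C_2$ lies within about $4r$ of $c^*$; combined with the usual Gonzalez $2r$-bound and the triangle inequality this yields the target $6r$. Pinning down this constant precisely, and in particular checking that the minimal-prefix rule for $K$ leaves the pigeonhole step enough room, is the crux of the proof.
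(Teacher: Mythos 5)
Your reduction to a feasibility subroutine via guessing $r$, your structural observation about the optimum, and your argument that $|K|\le|K^*|$ (hence the switching bound) are all fine and match the paper's framework. The gap is exactly where you flag it, but it is not a matter of ``pinning down the constant'': the cost bound for your subroutine is false, because choosing $K$ as a pure weight-ordered prefix ignores geometry. Concretely, take $k=3$ and $r^*$ tiny: old centers $c_1,c_2$ at distance $0.01$ from each other, each with $100$ co-located points of $P_1$; an old center $c_3$ at distance $1000$ from them with $10$ co-located points; one new point $q$ with $d(q,c_1)=d(q,c_2)=1001$ and $d(q,c_3)=1000$; and $S=100$. The optimum opens $\{c_1,c_3,q\}$, reassigns $c_2$'s points to $c_1$ (switching cost exactly $S$), and has cost $0.01$. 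Your subroutine, for any guess $r<1000$, computes weights $100,100,10$, needs cumulative retention $|P_1|-S=110$, and therefore sets $K=\{c_1,c_2\}$ — wasting a slot on two essentially co-located centers. A single Gonzalez step then picks $q$ (the farthest point), and $c_3$'s points are left at distance $1000$ from every center of $C_2$, an unbounded approximation ratio. The pigeonhole you hope for needs the $k-|K|+1$ mutually far points to land in more optimal clusters than are available, which requires $K\subseteq K^*$; here $c_2\in K\setminus K^*$ and the count fails. Equivalently: a center $c^*\in K^*\setminus K$ (here $c_3$) with low weight need not have any center of $C_2$ anywhere near it.

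The paper's algorithm repairs exactly this with two ingredients you are missing. First, the new centers are not chosen by Gonzalez but as a maximal $2R$-separated subset of the points \emph{uncovered} by $\bigcup_{c\in C_1}\ball(c,2R)$; this lets one prove their number $k'$ is at most the number of optimal centers lying far from all old centers, so $k-k'$ slots genuinely remain for old centers. Second, and crucially, the old centers are not taken as a weight prefix: the paper first extracts a $2R$-net $T$ of $C_1$ greedily in weight order (each selected center marks and suppresses all old centers within $2R$ of it), and for each $t\in T$ opens the heaviest old center inside $\ball(t,R)$. The net guarantees every old center — and hence every optimal center near an old center — is within $O(R)$ of something opened (giving the $6R$ bound), while each ball $\ball(t,R)$ contains a \emph{distinct} optimal center whose weight the opened center dominates (giving $\swcost(\C_1,\C_2)\le\swcost(\C_1,\C_2^*)\le S$). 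In the instance above, $c_1$ and $c_2$ dominate each other, so only one of them is kept and the freed slot goes to $c_3$. Without a separation/domination step of this kind among the old centers, no charging argument can rescue your prefix rule.
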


The main idea behind our $k$-center clustering is to develop a two-phase algorithm. First, we expand clusters around centers in $C_1$. Second, we carefully designate new centers from the point set $P_2 \setminus P_1$ and selectively close old centers, reassigning their points to other existing centers. The key insight in the proof is that it is possible to strike a good balance between the number of new cluster centers, reassignments, and the quality of the clustering solution. We provide a more detailed overview in Section~\ref{sec:k-center}.

Next, we consider the label consistent variant of the $k$-median clustering, another fundamental $k$-clustering problem.

\begin{restatable}{theorem}{thmdpmain} \label{thm:dpmain}
There exists a $O((nk+k^3) \polylog n)$ time algorithm that  given an instance $P_1, \C_1 = (C_1, \mu_1), P_2, S$  of the label consistent $k$-median problem,  returns a  solution  $\C_2 = (C_2, \mu_2)$ whose switching cost is at most $S$ and  $\cost(P_2, \C_2) \leq O(\log k) \cdot (\OPT + \cost(P_1, \C_1))$ with probability at least $1-1/n^{10}$, where $n$ is the number of points in the input instance. 
%There exists a randomized  $O(\log k)$-approximation algorithm for the label consistent $k$-median clustering problem on $n$ points with an aspect ratio polynomial in $n$ that runs in $O((nk+k^3) \polylog n)$ time with probability at least $1-1/n^{10}$. 
\end{restatable}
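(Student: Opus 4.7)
The plan is to reduce the problem to a tree-metric instance of polynomial-in-$k$ size and solve it exactly with a dynamic program; the $O(\log k)$ loss will arise from a random HST embedding on this small instance. First, I would compress $P_2$ into a $\poly(k)$-size weighted instance by replacing every point $p \in P_1$ with its current center $\mu_1(p) \in C_1$, creating at most $|C_1| \le k$ weighted ``super-points'' located at the points of $C_1$, each remembering its original label for switching-cost accounting. By the triangle inequality, any solution on the compressed instance corresponds to a solution on the original with clustering cost differing by at most $\cost(P_1,\C_1)$. For the newly added points $P_2 \setminus P_1$ I would apply a standard $k$-median coreset construction to obtain $\poly(k)$ further weighted representatives in time $O(nk\,\polylog n)$.

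On this compressed instance of size $\poly(k)$, I would use the Fakcharoenphol--Rao--Talwar embedding into a random HST, obtaining expected distortion $O(\log k)$. The label-consistent $k$-median problem on an HST admits an efficient bottom-up tree DP: for each node $v$ maintain a table $T_v[j]$ storing the minimum of $\cost + \lambda \cdot \swcost$ over all subtree solutions that open exactly $j$ centers, together with a flag indicating whether the subtree is served internally or by an external center inherited from above. Merging children gives total running time $O(k^3 \polylog n)$ since the tree has $\poly(k)$ nodes and $j$ ranges up to $k$. The Lagrangian multiplier $\lambda$ encodes the switching-cost budget, and binary search over $\lambda$ yields two neighbouring DP solutions sandwiching the budget $S$; a standard Lagrangian-rounding argument combines them at only constant-factor loss to meet $S$ exactly.

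Lifting the HST solution back through the compressed metric to $P_2$, the expected cost is $O(\log k) \cdot \OPT_{\mathrm{compressed}} \le O(\log k)\bigl(\OPT + \cost(P_1,\C_1)\bigr)$; repeating the randomised embedding $O(\log n)$ times and keeping the cheapest feasible solution boosts the success probability to $1-1/n^{10}$, adding only a $\polylog n$ factor to the runtime. The main obstacle is the DP step: the natural state space has a switching-cost dimension of size up to $|P_1|$, which the Lagrangian relaxation removes but at the cost of shifting the difficulty to the rounding. The delicate point will be showing that combining the two boundary DP solutions produced by the binary search yields an integral solution that loses only constant factors in both the clustering cost and the switching budget. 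A secondary subtlety is verifying that the coreset of the first step preserves clustering cost for arbitrary centers, including centers that may be drawn from the super-points themselves, which requires a careful weighted variant of the standard coreset guarantee.
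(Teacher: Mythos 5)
Your overall architecture matches the paper's: collapse $P_1$ onto $C_1$, reduce $P_2\setminus P_1$ to $O(\poly(k))$ weighted representatives, apply an FRT embedding to the resulting small instance, run a bottom-up tree DP, and repeat $O(\log n)$ times for the high-probability guarantee. (One side remark: for the new points the paper deliberately avoids coresets and instead runs a bicriteria $O(k)$-center $k$-median++ and \emph{moves} each new point to its assigned center, precisely because coresets only preserve costs and do not come with an explicit assignment of the non-sampled points; for the new points, which carry no consistency constraint, your coreset route is probably salvageable, but it is the more delicate choice and you would need the weighted-centers caveat you yourself flag.)

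The genuine gap is in the DP step. The theorem requires the switching cost to be at most $S$ \emph{exactly}, and your plan to eliminate the switching-cost dimension via a Lagrangian multiplier $\lambda$ plus binary search does not deliver this. Binary search over $\lambda$ yields two integral tree solutions, one with switching cost below $S$ (but with no control on how much worse its connection cost is than $\OPT$) and one above $S$; the ``standard Lagrangian-rounding argument'' you invoke is the one used for the cardinality constraint in $k$-median, where merging the two solutions relies on specific geometric structure, and no analogous merging is known for a weighted budget constraint on closed old centers. You cannot output a convex combination of the two integral solutions, and the paper's own unbounded integrality gap for the LP with a hard switching budget (\cref{lem:integrality}) is a warning sign that fractional or Lagrangian relaxations of this particular constraint can be arbitrarily bad. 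The paper sidesteps all of this by a different rearrangement: it makes the \emph{switching cost the DP value} and the \emph{connection cost a DP dimension}, then rounds only the connection-cost dimension to powers of $(1+\eps)$ with $\eps = \Theta(1/\ell_{\max})$ so that the accumulated multiplicative error over the tree depth is a constant. Since the connection cost is already being approximated by the $O(\log k)$ embedding, rounding it is harmless, while the switching cost — the quantity with a hard budget — is computed exactly. This swap is the missing idea in your proposal, and without it (or a genuinely new Lagrangian-merging argument that you have not supplied) the proof does not go through.
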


We obtain the above result by first simplifying the metric structure of the point set using probabilistic tree embedding~\cite{fakcharoenphol2003tight} and then designing a dynamic program that utilizes this tree structure. Our main technical contribution in this part is to reduce the running time of the algorithm. In fact, a naive implementation of our approach would require time linear in the aspect ratio and cubic in the number of nodes, but thanks to our optimization, we manage to obtain an almost linear time algorithm. We refer to Section~\ref{sec:dp} for details. It is evident from our technique that we can easily avoid the $O(\log k)\cost(P_1, \C_1)$ additive factor by paying $O(\log n)$ multiplicative factor.

Next, we demonstrate that we can improve the $O(\log k)$ factor to a constant by allowing resource augmentation.

\begin{restatable}{theorem}{thmlpmain} There is a polynomial-time algorithm that  given an instance $P_1, \C_1 = (C_1, \mu_1), P_2, S$  of the label consistent $k$-median problem,  returns $(k+1)$ centers with  assignment  $\C_2 = (C_2, \mu_2)$ whose switching cost is at most $S$ and  $\cost(P_2, \C_2) \leq 10 \cdot \OPT + \cost(P_1, \C_1)$. 
%approximates the optimal ($k$-median) connection cost within a factor $11$. 
Furthermore, for every fixed constant $\varepsilon > 0$, there is a polynomial-time algorithm that returns $k$ centers with assignment $\C_2 = (C_2, \mu_2)$ whose switching cost is at most $(1+\varepsilon)S$ and  $\cost(P_2, \C_2) \leq 10 \cdot \OPT + \cost(P_1, \C_1)$.
\label{thm:k-median}
\end{restatable}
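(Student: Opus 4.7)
The plan is to attack both parts via a linear programming relaxation plus rounding that slightly relaxes one of the two budget constraints. I take the natural LP for $k$-median with assignment variables $x_{ij}\in[0,1]$ and opening variables $y_j\in[0,1]$ over $i,j\in P_2$, augmented with the switching constraint $\sum_{i\in P_1}(1-x_{i,\mu_1(i)})\le S$. This LP has polynomial size and its value lower bounds $\OPT$. A complementary ingredient is the trivial ``do nothing'' strategy: keep $\mu_1$ on $P_1$ and assign each new point to its nearest old center; this incurs zero switching and cost at most $\cost(P_1,\C_1)+O(\OPT)$, which is what makes the additive $\cost(P_1,\C_1)$ slack in the theorem usable.

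For the first statement I would apply Charikar--Guha--Tardos--Shmoys-style filtering and rounding: after filtering, each client has a well-defined ball of fractional facility mass, and these balls are greedily partitioned into ``stars'' with one facility opened per star. The extra $(k+1)$-st opening is reserved to keep an old center of $\C_1$ open alongside the newly rounded facility in the star that contains it; this preserves the switching budget for the captive clients of that old center without any further openings. Any client whose rounded assignment would still push the switching count past $S$ is reverted to $\mu_1(i)$, and its extra cost is charged to $\cost(P_1,\C_1)$. The standard triangle-inequality analysis for $k$-median then delivers the factor $10$.

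For the second statement I eliminate the extra opening by enumerating (or locally swapping) the ``surplus'' facility identified in the first part: in polynomial time I try every candidate merge of its cluster into a neighbouring one and pick the best. Each such merge causes only a small number of re-labellings; by averaging and choosing the merge carefully, the total switching budget is kept within $(1+\eps)S$ while the cost factor $10$ is preserved, since all reroutings stay within clusters whose cost is already accounted for by the LP.

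The main obstacle is the joint handling of the two budget constraints: standard $k$-median rounding only needs to respect the opening budget $k$, but here the switching budget $S$ must also be respected (or nearly so). The key lever that makes a constant-factor bi-criteria approximation feasible is the asymmetry of the switching constraint: reverting a client to its old label is always feasible and its cost is absorbed into the additive $\cost(P_1,\C_1)$ term. Designing a rounding that exploits this slack without overpaying either in cost or in openings is the crux of the argument.
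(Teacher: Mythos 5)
Your overall framework (an LP relaxation plus Charikar--Li-style filtering and bundling, with either one extra center or a $(1+\eps)$ slack absorbing the rounding error on the switching constraint) matches the paper's, but the step that actually controls the switching cost after rounding is missing, and the fallback you rely on is not available. First, ``reverting a client to $\mu_1(i)$'' is only feasible if the center $\mu_1(i)$ is open in $C_2$; switching cost is incurred precisely because old centers get \emph{closed}, so the asymmetry you invoke (``reverting to the old label is always feasible and its cost is absorbed into $\cost(P_1,\C_1)$'') does not exist. Second, reserving the single $(k{+}1)$-st opening for ``an old center in the star that contains it'' does not explain why the total weight of closed old centers ends up at most $S$: a dependent rounding that only preserves the marginals $\Pr[i\in C_2]=y_i^*$ can still close a set of old centers whose total weight far exceeds $\sum_{i\in C_1}(1-y_i^*)w_i$. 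The paper's key device is an additional family of \emph{prefix constraints}: order $C_1$ by non-increasing weight and require $|C_1(\ell)\setminus C_2|\le\bigl\lceil\sum_{i\le\ell}(1-y_i^*)\bigr\rceil$ for every prefix $\ell$. These prefix constraints form one laminar matroid, the bundle/matching constraints form another, and integrality of the matroid-intersection polytope lets one write $y^*$ as a convex combination of integral points satisfying all constraints simultaneously; a telescoping-sum argument then shows the switching cost of any such integral point exceeds $S$ by at most the weight of a single fractionally open center, which the one extra opening repairs exactly. Without this mechanism (or an equivalent one) your first part does not go through.

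For the second part, the ``merge the surplus cluster into a neighbour and average'' step is also unsubstantiated: a merge can force every old point of a closed heavy center to switch, and you give no averaging bound controlling this by $\eps S$. The paper instead observes that only centers with $w_i\ge\eps S$ can cause an overshoot larger than $\eps S$, that any feasible solution closes at most $\lfloor 1/\eps\rfloor$ of them, and enumerates these $O(|P_2|^{1/\eps})$ guesses, fixing the corresponding $y_i\in\{0,1\}$ before re-solving and rounding the LP. You would need to import both ideas to complete the argument.
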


The two $10$ multiplicative and $\cost(P_1,\C_1)$ additive approximation algorithms for the $k$-median clustering with resource-augmentation in either the number of clusters or allowable switching cost are obtained by considering an LP relaxation of the problem and then we develop a rounding that builds upon the filtering and bundling technique of~\cite{CharikarL12} and carefully incorporates additional steps to handle the consistency constraints. We provide a more detailed overview in Section~\ref{sec:lprelaxation}.

We complement these algorithmic results by showing that the standard linear programming relaxation has an unbounded integrality gap when one opens at most $k$ centers and must satisfy the switching cost exactly, establishing the necessity of resource-augmentation for the current LP relaxation approach (see~\cref{sec:lpintegralitygap}).

\noindent \textbf{Comparison with a concurrent work.}
Concurrently with our work, Gadekar, Gionis, and Marette~\cite{gadekar2025} study the label consistent $k$-center clustering problem and obtain a $2$-approximation in time $2^{k}\,\mathrm{poly}(n)$, as well as a $3$-approximation in polynomial time. The algorithms for label consistent $k$-center clustering in both our work and~\cite{gadekar2025} employ a similar greedy strategy. In addition, our work considers the label consistent $k$-median clustering problem and provides approximation results for it.

\paragraph*{Additional notations} \iffalse
We assign a unique ID or label (will be used interchangeably) from $[k]$ to the set of centers $C$. Let $\id:C \to [k]$ be this labeling function. We further extend this labeling function to the set of all input points $P$ by assigning
\[
\forall_{j \in P}, \; \id(j) = \id(\mu(j)).
\]

\fi
\iffalse
Given an instance $P_1$, consider a solution $\C_1=(C_1,\mu_1)$ to the $k$-center (median/mean) problem. Let $\id_1$ be an associated labeling function. Next, we are given another instance $P_2$ where $P_1 \subseteq P_2$. For a solution $\C_2=(C_2,\mu_2)$ for the instance $P_2$, with an associated labeling function $\id_2$, the \emph{consistency} between $\C_1$ and $\C_2$ with respect to the labeling functions $\id_1$ and $\id_2$ is defined as
\[
\con_{\id_1,\id_2}(\C_1,\C_2) : = \sum_{j \in P_1} \mathbbm{1}_{\id_1(j) = \id_2(j)}.
\]
\fi

\iffalse
Equivalently, we can say that $\swcost(\C_1,\C_2)$ is the number of points in $P_1$ that need to be (re-)assigned to a new center (in $C_2 \setminus C_1$) due to the closure of the corresponding centers in the solution $\C_2$, i.e.,
\begin{equation}
    \label{def:switch}
    \swcost(\C_1,\C_2) = \sum_{i \in C_1 \setminus C_2} \left| \left\{ j \in P_1 \mid \mu_1(j) = i \right\} \right|.
\end{equation}
\fi

%\paragraph*{$(\alpha,\beta)$-approximation. }
Given two instances $P_1$ and $P_2$ of the $k$-center (median) problem, where $P_1 \subseteq P_2$, a solution $\C_1 = (C_1,\mu_1)$ of the instance $P_1$, and a budget $S$ on the switching cost, we say a solution $\C_2 = (C_2,\mu_2)$ of $P_2$ is an $(\alpha,\beta,\gamma)$-approximate solution, for an $\alpha,\beta,\gamma \ge 1$, iff  $\swcost(\C_1,\C_2) \le \alpha \cdot S$, and $\cost(P_2,\C_2) \le \beta \cdot  \cost(P_2,\C_2^*)+\gamma $
where $\C^*_2=(C^*_2,\mu^*_2)$ denotes an (arbitrary) optimum solution for $P_2$ that satisfies the budget of at most $S$ on the switching cost. For brevity, when $\alpha = 1$, we simply call it a $(\beta,\gamma)$-approximate solution and, when in addition, $\gamma = 0$, we will simply call it a $\beta$-approximation.

Another notation that will be useful in describing our algorithms (and analysis) is the \emph{ball} of a certain radius around a point in the metric space. Let $X$ be the underlying metric space. For any point $x \in X$ and any $r \ge 0$, let $\ball(x,r)$ denote the \emph{ball} of radius $r$ around $x$. Formally,
$
\ball(x,r) := \left\{y \in X \mid d(x,y) \le r\right\}.
$

The \emph{aspect ratio}, denoted by $\Delta$, is defined as the ratio between the largest distance and the smallest non-zero distance between any two input points. As common in the literature, we assume that the aspect ratio of a set of $n$ input points is at most some polynomial in $n$. This assumption can be removed by adding a factor $O(\log \Delta)$ to the running time.

\iffalse

%%%%%%%%%%%%%%%%%%%%%%%PERHAPS TO BE OMITTED%%%%%%%%%%%%%%%%%%%%%%%%%%%%%%
\paragraph*{PERHAPS TO BE OMITTED. }
Further, we define the \emph{strong consistency} between $\C_1$ and $\C_2$ as
\[
\stcon(\C_1,\C_2) := \max_{\id_1, \id_2} \con_{\id_1,\id_2}(\C_1,\C_2).
\]
It is worth noting that in the above definition, the maximum is over all possible labeling functions $\id_1$ and $\id_2$ associated with $\C_1$ and $\C_2$, respectively. However, it is straightforward to argue that it is safe to fix a labeling function $\id_1$ for $\C_1$ and then consider the maximum only over all possible labeling function $\id_2$ associated with $\C_2$.

\begin{definition}
Consider two solutions $(C_1, \mu_1)$ and $(C_2, \mu_2)$, for two instance $P_1$ and $P_2$, where $P_1 \subseteq P_2$ and $|C_1| = |C_2|$ and let $f: C_2 \rightarrow C_1$ be a mapping from $C_2$ to $C_1$. The consistency between $(C_1, \mu_1)$ and $(C_2, \mu_2)$ is defined to be 
\[
\con(\mu_1, \mu_2) = \max_{f} \sum_{j\in P_1} \mu_1(j) == f(\mu_2(j))\,.
\]
\end{definition}

\begin{definition}
Given two instances $P_1$ and $P_2$, where $P_1 \subseteq P_2$, and a solution $(C_1, \mu_1)$ for $P_1$, let $(C_2, \mu_2)$ be a solution with maximum consistency with $(C_1, \mu_1$), where $\cost(P_2, (C_2, \mu_2)) \leq \gamma \min_{(C, \mu)} \cost(P_2, (C, \mu))$. We call a solution $(C, \mu)$, $(\alpha, \beta)$-approximation if:
\begin{itemize}
    \item $\cost(P_2, (C, \mu)) \leq \gamma \max \left( \cost(P_2, (C_2, \mu_2)), \cost(P_1, (C_1, \mu_1)) \right),$ and
    \item $\con(\mu, \mu_1) \geq \beta \con(\mu_1, \mu_2)\,.$
\end{itemize}
\end{definition}

\fi

\section{Algorithm for Label Consistent $k$-Center}
\label{sec:k-center}
In this section, we provide an efficient approximation algorithm for the label consistent $k$-center. The following theorem, whose proof is postponed to Appendix~\ref{sec:app-k-center}, is the main result of this section:

\kcenter*

%It is worth mentioning that the above running time can be improved significantly to $O(kn \log n)$ by slightly increasing the approximation factor. More specifically, for any constant $\eps >0$, we get a $(6 + \eps)$-approximation algorithm that runs in time $O(kn \log n)$, where $O(\cdot)$ hides the dependency on $\epsilon$. We discuss this improvement while analyzing the time complexity of our algorithm.

\iffalse
We provide an algorithm that, given two instances $P_1$ and $P_2$ of the $k$-center problem, where $P_1 \subseteq P_2$, a solution $\C_1 = (C_1,\mu_1)$ of the instance $P_1$, and a budget $S$ on the switching cost, outputs a solution $\C_2 = (C_2,\mu_2)$ of $P_2$ such that
\begin{itemize}
    \item $\swcost(\C_1,\C_2) \le S$, and
    \item $\cost_{cen}(P_2,\C_2) \le 6 \cdot \cost_{cen}(P_2,\C_2^*) $
\end{itemize}
where $\C^*_2=(C^*_2,\mu^*_2)$ denotes an (arbitrary) optimum solution for $P_2$ satisfying the given budget of at most $S$ on the switching cost.
\fi

Our algorithm works in two phases. In the first phase, we grow the cluster around each "old" center ($c\in C_1$). Then, we create new centers from the point set $P_2 \setminus P_1$ that are not covered by the grown clusters around old centers. In the next phase, we close some of the old centers by reassigning the corresponding cluster points to other old centers. Let us now briefly elaborate on these two phases.

\noindent \textbf{Phase 1: Growing phase and opening new centers. }For ease of presentation, we assume that we know the optimum value of the objective function. (Later in Remark~\ref{rem:assump}, we discuss how to remove this assumption.) Let $R := \cost(P_2, \C^*_2)$, where $\C^*_2=(C^*_2,\mu^*_2)$ denote an (arbitrary) optimum solution for $P_2$ that satisfies the budget of at most $S$ on the switching cost. 

Now, we grow the clusters around each center $c \in C_1$ and cover some of the new points ($\in P_2 \setminus P_1$). For that purpose, for each center $c \in C_1$, consider the set $P_2 \cap \ball(c,2R)$ and call these points of $P_2$ \emph{covered}. Let $U$ denote the set of \emph{uncovered} points of $P_2$. 

Next, we open a set of new centers $C_2$ greedily from $U$ to cover all the points in $U$. For that purpose, we open a center at an arbitrarily chosen point $u \in U$, remove all the points in $\ball(u,2R)$ from $U$, and continue until $U$ becomes empty. Let $k'$ be the number of centers opened at the end of the above greedy procedure (i.e., $k' = |C_2|$). Thus, we are allowed to keep at most $k-k'$ centers from the set of old centers $C_1$. In the next phase, we keep $k-k'$ centers from $C_1$ and close the remaining.

\noindent \textbf{Phase 2: Closing and reassigning phase. }For each $c \in C_1$, let us define its \emph{weight} $w_c$ to be the size of the corresponding cluster in $\C_1$, more specifically, the number of points of $P_1$ that are assigned to the center $c$ by the assignment function $\mu_1$ and are within distance $2R$ from $c$.

\iffalse
Next, we use the following procedure to keep (at most) $k-k'$ old centers open and close the remaining, and then reassign the points in the followers set of the closed centers.
\begin{enumerate}
    \item Initialize all the centers in $C_1$ as unmarked.
    \item Sort the centers $c \in C_1$ in the non-increasing order of their weights $w_c$.
    \item Process each $c \in C_1$ in the above sorted order until all the centers in $C_1$ are marked
    \begin{enumerate}
        %\item Consider $\ball(c, 5R)$,
        %\item If $\exists c' \in D(c)$ such that $c'$ is unmarked, then open $c$ (add $c$ in $C_2$) and mark all the centers in $D(c)$.
        \item If $c$ is unmarked, then open $c$ (add $c$ in $C_2$) and mark all the centers in $D(c)$.
    \end{enumerate}
    \item Open (add to $C_2$) all the centers in $C_1$ in the sorted order until $|C_2|=k$.
\end{enumerate}
\fi

Next, we keep (at most) $k-k'$ old centers open and close the remaining, then reassign the points in the old clusters of the closed centers. We start with all the centers in $C_1$ as unmarked. We then sort the centers $c \in C_1$ in the non-increasing order of their weights $w_c$. Next, we process them one by one in that sorted order. If a center is unmarked so far, we keep that center "temporarily" open (by adding them in a new set $T$) and mark (and close) all the centers $c' \in C_1$ such that $c' \in \ball(c, 2R)$. Next, we process each temporarily opened center $c \in T$. Consider the set $\ball(c,R)$ and let $c'$ be a center in $C_1\cap \ball(c,R)$ that has the highest weight among all the centers in $C_1 \cap \ball(c,R)$ (note that $c'$ could be different from $c$ because it could be dominated by a third center in $T$). Then, we add $c'$ to $C_2$ and remove $c$ from $T$. In the end, if the number of open centers is less than $k$, we open the closed centers (from $C_1$) one by one in the above-mentioned sorted order until the number of open centers is exactly $k$.

We describe the algorithm in detail (along with pseudocode) in~\cref{sec:app-k-center} and present a visual description of it in Figure~\ref{fig:kcenter}. Let $C_2$ denote the set of centers opened by our algorithm. We define the assignment function $\mu_2$ for the center set $C_2$ as follows: For each point $p \in P_2$, if $\mu_1(p) \in C_2$ and $d(p,\mu_1(p)) \le 2R$, then set $\mu_2(p) = \mu_1(p)$; otherwise, set $\mu_2(p)$ to be a closest (breaking ties arbitrarily) center point in $C_2$.

\begin{figure}[h]
    \includegraphics[scale=1]{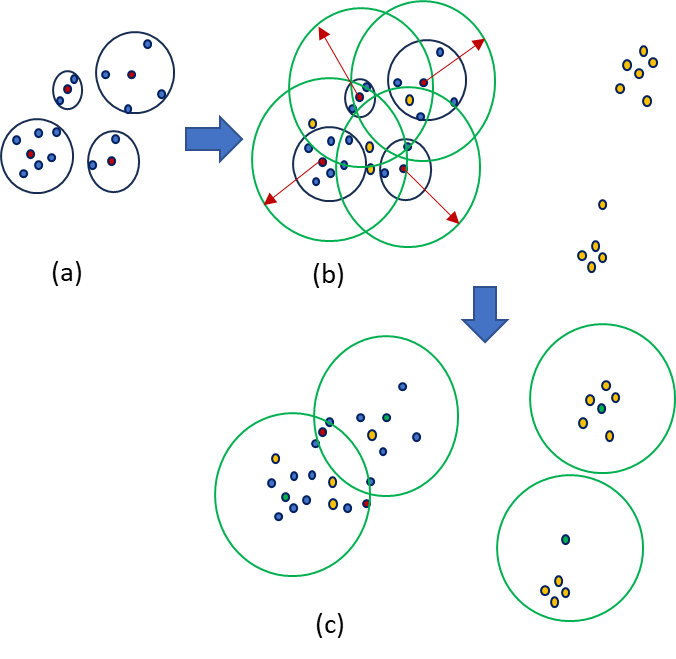}
    \caption{\label{fig:kcenter}Visualization of the k-center clustering algorithm: {\bf{(a)}} A solution of $4$-center (i.e., $k=4$) on point set $P_1$ (red points denote centers). {\bf{(b)}} Yellow points are added, i.e., $P_2 = P_1 \cup $ \{set of new yellow points\}. Balls (of color green) of radius $2R$ (where $R$ is the optimum $k$-center cost of $P_2$ with switching cost at most $S$) are drawn around old (red) centers, and the yellow points that are uncovered (by green-colored balls) constitute the set $U$. {\bf{(c)}} Two new centers are opened to cover points in $U$, and thus two old centers need to be closed. Finally, four green points are the four centers (among them two are old centers and two are new) for the final $4$-center solution for $P_2$. Note that the points that are covered by multiple green clusters can be assigned to their closest green center points.
    }
\end{figure}

Next, we sketch a proof of~\cref{thm:k-center}; for a full proof, please refer to~\cref{sec:app-k-center}.

\begin{proof}[Proof Sketch of~\cref{thm:k-center}] 
We compare the solution output by our algorithm with an (arbitrary) optimum solution $\C^*_2=(C^*_2,\mu^*_2)$ for $P_2$ such that $\swcost(\C_1,\C^*_2) \le S$. Let $O$ be the set of points in $P_2$ that could be covered by the old centers $C_1$, and $N$ be the set of remaining points. We first argue that in Phase 1, from the set $N$, our algorithm opens at most as many centers opened by an optimal solution $\C^*$ (\cref{lem:comp-new-center}). Using that, we argue that our algorithm opens at most $k$ centers. Next, we show that for each center $c^*$ in an optimal solution ($c^* \in C^*_2$), there is a center opened by our algorithm within a distance of $5R$, and thus by triangle inequality, for each point $p \in P_2$, there is a center in $C_2$ within a distance of $6R$. (See~\cref{lem:k-cen-feasible} for the detailed argument.) It only remains to show that $\swcost(\C_1, \C_2) \le S$. For that purpose, we argue that $\swcost(\C_1, \C_2)$ is not more than $\swcost(\C_1, \C^*_2)$ which is at most $S$. The crux of the argument lies in the fact that for each point in $T$ (i.e., each temporarily opened center), if we consider a ball of radius $R$ around it, an optimal solution $\C^*_2$ must open a distinct center within that ball. Now, since our algorithm opens a center of the highest weight from each such ball, the total weight (i.e., consistency) is at least that achieved by the optimal solution $\C^*_2$. (See~\cref{lem:k-cen-switch}.)
\end{proof}

\section{Approximation Algorithms for Label Consistent $k$-Median}

In this section, we present label consistent clustering algorithms for the $k$-median objective. Our first algorithm, presented in Section~\ref{sec:dp}, uses a probabilistic tree embedding to embed the metric in a tree and then uses a carefully design dynamic programming approach to solve the problem.

\thmdpmain*

Our remaining results for $k$-median are based on the linear programming (LP) relaxation for the label-consistent clustering problem and are presented in Section~\ref{sec:lprelaxation}. 
%The following two algorithms are obtained by efficiently rounding this LP relaxation and presented in detail in Section~\ref{sec:lprrounding}. 

\thmlpmain*

We complement these algorithmic results by showing that the standard linear programming relaxation has an unbounded integrality gap when one opens at most $k$ centers and must satisfy the switching cost exactly. This shows that this LP cannot be used to give an exact algorithm and, therefore, shows the tightness of our results, which use a minimally larger switching cost or one extra center to achieve a good approximation. We present the following result in~\cref{sec:lpintegralitygap}.

\begin{lemma}
\label{lem:integrality}
    The standard linear programming relaxation for label consistent $k$-median clustering has an unbounded integrality gap.
\end{lemma}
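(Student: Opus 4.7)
The plan is to construct an explicit family of label consistent $k$-median instances, parametrised by an integer $N\to\infty$, whose integrality gap grows like $\Omega(N)$. I assume the standard LP has the usual variables $x_{ij}, y_i$ with assignment constraints $\sum_i x_{ij}=1$, capacity $x_{ij}\le y_i$, budget $\sum_i y_i\le k$, plus the switching constraint $\sum_{j\in P_1} x_{\mu_1(j),j}\ge |P_1|-S$. The instance I would use is: $k=1$, $P_1$ consists of $N$ coincident points with $C_1=\{p_1\}$ and $\mu_1\equiv p_1$, and $P_2 = P_1\cup Q$ where $Q$ is a set of $|Q|=N^2$ coincident new points placed at distance $D$ from $P_1$. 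The switching budget is $S=N-1$.

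The integer lower bound is immediate from $|C_2|\le 1$: the single opened center is either $p_1$, in which case every point of $Q$ is served at distance $D$, or it lies elsewhere, in which case all $N$ points of $P_1$ are reassigned and the switching cost is $N>S$. Hence the only feasible integer solution is $C_2=\{p_1\}$, with cost $|Q|\cdot D = N^2 D$.

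For the LP upper bound I would exhibit the fractional solution $y_{p_1}=1/N$ and $y_q=(N-1)/N$ for a representative $q\in Q$, with all other $y$-variables equal to $0$. Each old point is assigned mass $1/N$ to $p_1$ and $(N-1)/N$ to $q$; each new point is assigned mass $(N-1)/N$ to $q$ and $1/N$ to $p_1$. All constraints are straightforward to verify, and the switching constraint is tight at $N\cdot(N-1)/N = N-1 = S$. The objective evaluates to $(N-1)D + N\cdot D = (2N-1)D$, yielding integrality gap at least $N^2/(2N-1) = \Omega(N)$, which diverges as $N\to\infty$.

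The main conceptual step, and the only real obstacle before the calculation becomes routine, is seeing that the LP can unlock a fractional opening of a new center by spreading a tiny fractional switch of $1/N$ across all $N$ old points, whereas integrally the switching cost is quantised in multiples of $N$ (the full cluster that would have to be closed to free up a center). The asymmetry $|Q| = N^2 \gg N = |P_1|$ is precisely what makes opening this fractional new center worthwhile in the LP objective, and it is also what makes the $k+1$ centers or $(1+\varepsilon)S$ switching budget relaxations in \cref{thm:k-median} essentially tight.
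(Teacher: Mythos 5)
Your construction is correct: the fractional solution you exhibit satisfies the paper's switching constraint $\sum_{i\in C_1}(1-y_i)w_i\le S$ exactly (with $C_1=\{p_1\}$ and $w_{p_1}=N$ one gets $(1-1/N)N=N-1=S$), all other LP constraints check out, and the integral lower bound of $N^2D$ versus the fractional upper bound of $(2N-1)D$ gives a gap of $\Omega(N)$. The mechanism you isolate --- the LP can pay the switching budget in fractional increments by partially closing an old center, whereas integrally the switching cost is quantized in multiples of whole cluster sizes --- is exactly the mechanism the paper exploits. The instances differ, though: the paper uses $k\ge 2$ clusters of $M$ co-located old points each, budget $S=2M-1$, and two new points at mutual distance $D$ placed infinitely far from $C_1$; integrally only one old center can be closed, so one new point pays $D$, while fractionally a sliver $2/k-1/(Mk)$ of each old center is closed to open both new locations almost fully, and the gap emerges after letting $D\to\infty$ and then $M\to\infty$. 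Your instance is arguably cleaner: it works already for $k=1$, needs only a single limit, and makes the quantization phenomenon explicit by weighting the new location with $N^2\gg N$ points rather than by pushing the new points to infinity. One small caveat: you stated the switching constraint in terms of the assignment variables ($\sum_{j\in P_1}x_{\mu_1(j),j}\ge |P_1|-S$) rather than the $y$-based form the paper actually uses; since your fractional solution satisfies both, and the integral argument is insensitive to this choice, nothing breaks, but for splicing into this paper you should phrase the verification against the $y$-based constraint of Section~\ref{sec:lprelaxation}.
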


\subsection{Approximation for $k$-median via Probabilistic Tree Embedding and Dynamic Programming}\label{sec:dp}

Our approximation algorithm for $k$-median employs various carefully chosen simplification, and rounding steps until the simplified problem can be solved approximately via a low-dimensional dynamic program that can be efficiently run on large data sets. 

A key simplification step in our algorithm is a probabilistic tree embedding that simplifies the structure of distances between points. 

\begin{definition}[Tree embedding]
Given a set of $t$ points, a tree embedding is a rooted binary tree with the following guarantees:\\
-- There are exactly $t$ leaves, each corresponding to one input point. \\
-- All internal nodes have exactly two children.\\
The depth of the tree is denoted by $\ell_{max}$.
\end{definition}

The internal nodes of a tree embedding naturally correspond to subsets of points, namely the points corresponding to the leaves in the sub-tree rooted at the internal node. We label each internal node with the diameter of this associated point set. Naturally, these tree-distances are monotone as one goes down in the tree. The tree-distance of two points in a tree embedding is defined as the diameter of the smallest cluster containing both or, equivalently, the distance label given to the lowest common ancestor of the two leaves to which they correspond. It is clear by definition that the tree-distance of two points is always at least as large as their distance. A good probabilistic tree embedding guarantees that the distances are not stretched too much in expectation. Many efficient algorithmic constructions of such probabilistic tree embeddings with stretch $O(\log t)$ are known~\cite{fakcharoenphol2003tight}. %More formally,

\begin{theorem}\cite{fakcharoenphol2003tight}
There exists an algorithm that, given any $t$ points and their $s=t^2$ metric distances with an aspect ratio of at most $\Delta$, samples in $O(s \log s \log \Delta)$ time, a tree embedding with depth $\ell_{max} = O(\log s \log \Delta)$ and expected stretch $O(\log t)$, i.e., with the guarantee that for any two points their tree-distance is not lower and in expectation at most an $O(\log t)$ factor larger than their distance. 
\end{theorem}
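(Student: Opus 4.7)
The plan is to prove this theorem via the Fakcharoenphol--Rao--Talwar (FRT) random hierarchical partition, followed by a straightforward binarization step to convert the resulting laminar tree into the binary tree required by the statement. First, I would rescale so the minimum nonzero distance is $1$ and all distances lie in $[1,\Delta]$, and set $L = \lceil \log_2 \Delta \rceil + 1$. The construction samples a uniformly random permutation $\pi$ of the $t$ points and an independent scale $\beta \in [1,2]$ uniform. Then, for $\ell = L, L-1, \ldots, 0$, refine the partition at level $\ell-1$ by assigning each point $p$ (inside its current cluster) to the sub-cluster whose ``center'' is the point $c$ of minimum $\pi$-index with $d(p,c) \leq \beta \cdot 2^{\ell-1}$. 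Label each level-$\ell$ cluster with the value $2 \cdot 2^\ell$, so that the tree-distance of two leaves equals twice the scale of their lowest common ancestor.

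For the dominance property (tree-distance $\geq$ metric distance), I would observe that if $p$ and $q$ share a level-$\ell$ cluster then both lie within $\beta \cdot 2^{\ell-1} \leq 2^\ell$ of its chosen center, so $d(p,q) \leq 2 \cdot 2^\ell$, which is exactly the label assigned to that cluster. For the expected stretch, I would invoke the classical FRT charging argument: for each fixed pair $p,q$ with $d(p,q) = x$, the event that they first get separated at level $\ell$ is charged to a ``settler'' point $c$ that lies within distance $\beta \cdot 2^{\ell-1}$ of at least one of them and has the smallest $\pi$-index among all such points. Ordering candidate settlers by their distance to $\{p,q\}$ and averaging over $\beta$ and $\pi$ makes the $i$-th ranked candidate contribute at most $1/i$ (weighted by a geometric factor in $2^\ell$), and the sum telescopes to $x \cdot H_t = O(\log t) \cdot x$.

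At this point the tree has depth $O(\log \Delta)$ but need not be binary, since an internal node can have up to $t$ children. I would binarize by replacing every internal node of degree $m$ with a balanced binary tree of depth $\lceil \log_2 m \rceil \leq \lceil \log_2 t \rceil$ on its children, propagating the parent's label to all the newly introduced internal nodes. This preserves the lowest-common-ancestor label for every pair of original leaves, so both the dominance and the expected-stretch guarantees carry over, while the depth becomes $O(\log t \cdot \log \Delta) = O(\log s \cdot \log \Delta)$, as required. For the running time, sampling $\pi$ costs $O(t \log t)$; at each of the $O(\log \Delta)$ levels, the settler of each point can be identified in amortized $O(t)$ time by scanning pairwise distances in $\pi$-order with level-wise bookkeeping, giving $O(s)$ per level and $O(s \log \Delta)$ overall; the binarization adds $O(t \log t)$ per level. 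All of this fits within $O(s \log s \log \Delta)$.

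The hard part, and where the entire bound hinges, will be the expected-stretch analysis: the proof requires identifying precisely which points' random $\pi$-indices can cause a given pair $p,q$ to be split at a given level, and controlling the interplay between $\pi$ and $\beta$ so that the conditional cutting probabilities telescope into the harmonic sum $H_t$. Once this key lemma is in place, the remaining ingredients---dominance, binarization, and the running-time bookkeeping---are essentially routine.
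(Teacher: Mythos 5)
Your proposal is correct and is essentially the standard Fakcharoenphol--Rao--Talwar argument that the paper simply cites for this theorem: random permutation plus random scale, the harmonic-sum charging argument for the $O(\log t)$ expected stretch, and the routine binarization of each high-degree internal node (with label propagation) to get the binary tree of depth $O(\log s \log \Delta)$ required by the paper's definition, all within the stated time bound.
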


Our algorithm starts with reducing the number of input points, before applying the tree-embedding to ensure the linear running time dependency on the size of the input. 
A common approach for such an input size reduction while losing only a small factor in approximation guarantees is to use coresets. Unfortunately it is not clear how those type of solutions can be used in the context of label-consistency since they merely approximately preserve costs of solutions by sampling and reweighting points but subtly yet crucially do not guarantee assignments of non-sampled points to weighted sampled points. 

To avoid this issue, we use a moving-based strategy. We start by first moving all the points in the old solution ($C_1$) to their centers and let the weight of each center equal the number of points in its cluster. To reduce the size of the new points in $P_2 \setminus P_1$, we use any (non-consistent) constant approximation algorithm $\mathcal{A}$ that computes a clustering of size $O(k)$ and run it on $P_2 \setminus P_1$. One well-known such algorithm is k-median++~\cite{arthur2007k}, which is a constant approximation when used with $O(k)$ centers. Afterward, we move all the points to their assigned centers. For the points in $P_1$, we move them to the center that they are assigned to in $C_1$. We let the weight of a center be the number of points moved to the center in both cases. 

To summarize, we used a moving approach that creates an instance of $O(k)$ weighted new points and $k$ weighted old points while increasing the cost by at most a constant factor plus the cost of the old solution ($C_1$).
%and the total number of points that we would required to embed is $O(k)$
%which results in the following theorem.
% the cost of $C_1$ on $P_1$ and the cost of (near) optimally clustering $P_2$ without consistency constraints, both of which are acceptable for our approximation algorithm. 
%Therefore the total number of points that we embed is $O(k)$, which results in the following theorem.

As a next step, we embed all these (weighted) points into a tree. This only increases distances and, therefore, connection costs but otherwise leaves switching costs of solutions unchanged. Since distances change in expectation by at most an $O(\log k)$ factor point-wise, the expected cost of the optimal solution changes at most by this $O(\log k)$ factor in expectation and remains feasible.

In the remainder of this section, we show how to efficiently compute a constant-approximate solution for the resulting simplified problem with $\Theta(k)$ points on a tree using dynamic programming and rounding, resulting in a randomized algorithm that produces a solution with expected $O(\log k)$ cost. Repeating this $O(\log n)$ times and taking the best solution guarantees a $O(\log k)$-approximation with high probability, e.g., with probability at least $1 - 1/n^{10}$.

We first show that the problem of $k$ nodes of total weight summing up to $n$ with distances given by a tree embedding can be solved exactly using dynamic programming, albeit with a prohibitively large in $O(n^2 k^3)$ running time, which still depends on $n$. We then show that clever rounding and rearrangements can reduce the running time to $O(k^3 \polylog n)$.

\begin{theorem}\label{thm:ExactTreeDP}
Given $t$ nodes with total weight $n$ whose distances are given by a tree embedding, the optimal $k$-median cost for switching cost $S\leq n$ can be computed deterministically and exactly in $O(t \cdot S^2 \cdot k^2) = O(n^2 k^3)$ time. 
\end{theorem}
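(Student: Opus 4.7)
The plan is a bottom-up dynamic program on the binary tree embedding whose state at each node $v$ is the pair $(k', s')$ consisting of the number $k'$ of centers opened in the subtree rooted at $v$ and the amount $s'$ of switching cost incurred by the old leaves in this subtree. I would define $f_v(k', s')$ as the minimum cost of assigning the leaves of this subtree to $k'$ internally opened centers under the convention that if $k' \ge 1$ then every leaf of the subtree is served by an internal center, while if $k' = 0$ then all leaves of the subtree are left pending, with their connection cost being paid at a strict ancestor of $v$. Since in the reduced instance each old leaf $p$ of weight $w_p$ contributes exactly $w_p$ to the switching cost precisely when $p$ is not itself opened, the switching cost of a subtree is additive over its two children.

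The observation that justifies this state --- and in particular that avoids tracking the pending weight as a third DP dimension, which could be as large as $n$ and would destroy the target running time --- is the ultrametric property of tree embeddings. Labels are non-increasing from root to leaves and any two leaves are at distance equal to the label of their LCA. Thus, whenever the subtree of $v$ contains at least one open center, every leaf in this subtree can be served internally at cost at most $\ell(v)$ per unit weight, while any center outside it lies at distance at least $\ell(\mathrm{parent}(v)) \ge \ell(v)$. Because the switching cost depends only on which old leaves are opened and not on which surviving center serves a given leaf, internalizing assignments whenever possible is weakly better. Consequently, there is an optimal solution in which the pending mass at $v$ equals the total subtree weight $W(v)$ when $k' = 0$ and equals $0$ when $k' \ge 1$, so $(k', s')$ indeed determines the subtree's contribution.

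The transition at an internal node $v$ with children $u_1, u_2$ is then
\[
f_v(k', s') \;=\; \min_{\substack{k_1' + k_2' = k' \\ s_1' + s_2' = s'}} \Bigl( f_{u_1}(k_1', s_1') + f_{u_2}(k_2', s_2') + \Delta(k_1', k_2', v) \Bigr),
\]
where $\Delta = W(u_1)\,\ell(v)$ if $k_1' = 0$ and $k_2' \ge 1$ (the pending $u_1$-leaves are served by a $u_2$-center at LCA $v$), $\Delta = W(u_2)\,\ell(v)$ in the symmetric case, and $\Delta = 0$ when both children have a center or neither does. The base cases at a leaf $p$ set $f_p(1, 0) = 0$ and $f_p(0, w_p \cdot \mathbf{1}[p \text{ old}]) = 0$, with all other entries $+\infty$; the returned cost is $\min_{s \le S} f_{\mathrm{root}}(k, s)$. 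Correctness follows by induction on $v$, using the internalization claim to argue that every solution can be transformed, without increasing either cost or switching, into one that fits the convention above. Since there are $O(t)$ nodes, each holding $O(kS)$ states, and each state is filled by enumerating $O(kS)$ splits in constant time, the total running time is $O(t \cdot k^2 \cdot S^2)$, matching the theorem. The main obstacle is precisely the internalization step used to eliminate the pending-mass dimension; once it is established, the remainder of the argument is a routine tree-DP induction.
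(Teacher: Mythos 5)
Your proof is correct and follows essentially the same route as the paper's: a three-dimensional tree DP over (node, number of centers in the subtree, consistency/switching budget), with the zero-center child case charged at the node's diameter label and an $O(t\cdot k^2\cdot S^2)$ count of states times transitions. The only differences are cosmetic --- you track switching cost where the paper tracks its complement (consistency), and you make explicit the ultrametric ``internalization'' argument that the paper leaves implicit --- which is a welcome addition rather than a divergence.
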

\begin{proof}
Our exact dynamic program has three dimensions. The first one indicates the id of the node on the tree, the second one is the number of centers we allow to be opened in the subtree of the node, and the last one is the consistency of the subtree, i.e., the number of points assigned to their old center. Notice that since we moved all the old points to their center, if an old center is open, all the points in its cluster are assigned to it. The goal of the DP is to find the minimum cost solution that assigns all the points to a center in the subtree with the discussed requirements. If no such solution exists, we define the cost as infinite.
\begin{align*}
dp[\text{id}][k][c] = \text{Min cost solution in subtree of node 'id',} \\
\text{ with at most $k$ centers, and consistency at least $c$.}
\end{align*}

The initialization of the DP for the leaves is straightforward. For any $k \ge 1$, if $c = 0$, then we open the point on this leaf, and the cost of the solution is zero; otherwise, the cost is zero if the point on this leaf is an old point with weight equal or higher than $c$, and infinite otherwise. Recall that we have at most one point assigned to each leaf by the properties of the tree-embedding.

For the non-leaf nodes of the tree, one needs to distribute the number of open centers and consistency between the two children of the node.  %If 'id' has one child 'l', the case is trivial and $dp[\text{id}][k][c] = dp[\text{l}][k][c]$.  
We next describe how to compute $dp[\text{id}][k][c]$ for a node ``id'' with its two children named ``l'' and ``r'' and values $k' \leq k$ and $c$. The first option is to open zero centers in ``l'' subtree and connect all points to one of the centers opened in ``r'' subtree. The cost in this case is
\begin{align*}
     dp[\text{r}][k][c] + \text{(num nodes in ``l'')} &\cdot \text{(diameter of the points of ``id'').} 
\end{align*}
Recall that the diameter of the points of node ``id'' is an upper bound on the maximum distance in its subtree computed by the tree embedding. An analog holds for the case that we do not open any center on ``r''. The only remaining options are to open $1 \leq k' \leq k-1$ centers on ``l'' with consistency $0 \leq c' \leq c$, and $k-k'$ centers with $c-c'$ consistency on ``r''. The costs for these options are

\[
dp[\text{l}][k'][c'] + dp[\text{r}][k-k'][c-c'].
\]

The value of $dp[\text{id}][k][c]$ is the lowest among all discussed options. 

The total number of values computed by the DP is at most $O(t \cdot k \cdot S)$ because the tree consists of $O(t)$ nodes; the second dimension counting the open centers can take on values between $1$ and $k$, and the last dimension specifying the consistency can only take on values between $|P_1| - S$ and $|P_1|$. The number of options considered by the DP above for every node is, furthermore, at most $(k+1) \cdot (S+1)$. Overall, at most $O(t k^2 S^2)$ additions and comparisons are performed by the DP. 
\end{proof}

The exact DP stated in Theorem~\ref{thm:ExactTreeDP}, with its quintic dependency on input parameters and quadratic dependency on $S$, which is often of size proportional $n$, i.e., the number of points in $P_2$ is much too computationally expensive.

To improve the running time, we rearrange the DP and round connection costs and utilize the fact that the tree obtained from the tree embedding has low depth. This produces a solution that remains exact in its consistency and has a fast $O(k^3 \polylog n)$ running time depending on $n$ only polylogarithmically. A detailed description of this optimized DP (along with a pseudocode of the whole algorithm) and its analysis is given in~\cref{sec:dpfast}.

\begin{theorem}\label{thm:RoundedTreeDP}
There exists a randomized algorithm that given $t = O(k)$ weighted points with weights and aspect ratios bounded by some polynomial in $n$ and a switching cost constraint $S$, computes an $O(\log k)$-approximate $k$-median clustering in $O(k^3 \polylog n)$ time with high probability, i.e., with probability at least $1 - 1/n^{10}$.
\end{theorem}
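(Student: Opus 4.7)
The plan is to combine a probabilistic tree embedding with a carefully restructured dynamic program and boost success probability by repetition. First, I will invoke a Fakcharoenphol--Rao--Talwar style tree embedding on the $t = O(k)$ weighted points, producing a rooted binary tree of depth $\ell_{\max} = O(\log k \log \Delta) = O(\polylog n)$ and expected pairwise stretch $O(\log k)$. Tree embeddings preserve point identities, so the switching budget $S$ carries over verbatim and the optimal label-consistent cost grows by at most an $O(\log k)$ factor in expectation; it therefore suffices to find an $O(1)$-approximate solution on the resulting tree instance.

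The heart of the argument is to restructure the DP of \cref{thm:ExactTreeDP} so that the $S^2$ factor disappears. The key move is to swap the roles of cost and consistency: instead of indexing by a consistency requirement $c \in [0, S]$ and storing the minimum cost, I index by a rounded cost bucket $b$ and an auxiliary rounded escape-weight bucket $w$ (the weight of points in the subtree left to be served by some ancestor), and store the maximum achievable consistency as the DP value. Rounding connection costs to powers of $(1+\varepsilon)$ gives $O(\polylog n)$ cost buckets, and rounding escape weight geometrically, with a scale tied to the diameter of the parent at each node, gives $O(\polylog n)$ escape buckets. The resulting table has $O(k) \cdot O(k) \cdot O(\polylog n) \cdot O(\polylog n) = O(k^2 \polylog n)$ entries in total; the switching constraint is enforced exactly at the root by reading off the smallest cost bucket whose stored value is at least $|P_1| - S$. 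An internal-node transition convolves the two children's tables over splits $(k_1+k_2=k',\ b_1+b_2 \le b,\ w_1+w_2 \le w)$ and adds the charge $\mathrm{diam}(v) \cdot (\text{escape weight absorbed at } v)$, costing $O(k^2 \polylog n)$ per node and therefore $O(k^3 \polylog n)$ overall. A single run of the embedding plus the DP yields an $O(\log k)$-approximation in expectation, so by Markov's inequality it succeeds within a constant factor of this bound with probability $\Omega(1)$, and $O(\log n)$ independent repetitions, keeping the cheapest feasible solution, amplify the guarantee to probability at least $1 - 1/n^{10}$.

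The main obstacle I expect is the calibration of the escape-weight rounding. A naive geometric rounding either leaves too many buckets or, because escape weight at a node $v$ is eventually paid against the diameter of some ancestor, can accumulate a cost blow-up that grows with $\ell_{\max}$. My plan is to discretize escape weight at each $v$ in units scaled by $\mathrm{diam}(\mathrm{parent}(v))/\ell_{\max}$, so that the local rounding error at $v$ is a $1/\ell_{\max}$ fraction of the cost those units must eventually incur; summing over all $\ell_{\max}$ levels telescopes to a constant-factor blow-up that is safely absorbed into the $O(\log k)$ stretch. Consistency itself remains exact through every convolution step, since it is stored as an integer DP value and only added at transitions, and this is precisely what lets the restructured DP simultaneously satisfy the switching constraint exactly and run in $O(k^3 \polylog n)$ time.
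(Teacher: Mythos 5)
Your proposal is correct in its essential mechanics and lands on the same key ideas as the paper: swap the roles of cost and consistency so that the (rounded) connection cost becomes a DP index while the switching-related quantity becomes the stored value, round costs to powers of $(1+\eps)$ with $\eps = \Theta(1/\ell_{\max})$ so that the per-level $(1+\eps)$ loss accumulates to only a constant over the $O(\polylog n)$-depth tree, enforce the switching budget exactly at the root, and boost to probability $1-1/n^{10}$ via $O(\log n)$ repetitions and Markov. (The paper stores the minimum switching cost where you store the maximum consistency; these are trivially interchangeable.) The one genuine divergence is your extra ``escape-weight'' dimension, and here the paper is simpler for a structural reason you did not exploit: on a tree embedding, distances are determined by the diameter label of the LCA and are monotone along root-to-leaf paths, so if a subtree contains \emph{any} open center, every point of that subtree is at least as close to that center as to any center outside the subtree. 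Hence partial escape is never beneficial, and the only escape the DP must model is the all-or-nothing case in which a child receives zero centers — handled by two special transitions that charge the child's entire weight against the current node's diameter, with no additional DP dimension. This dissolves what you identify as your ``main obstacle'': there is no escape-weight rounding to calibrate. I would also caution that your additive discretization of escape weight in units of $\mathrm{diam}(\mathrm{parent}(v))/\ell_{\max}$ does not by itself yield $O(\polylog n)$ buckets (the ratio of total weight to that unit need not be polylogarithmic); the geometric rounding you mention earlier is the version that works, with the rounding error absorbed multiplicatively as for costs. With that repair your route goes through at the same $O(k^3\polylog n)$ complexity, but the paper's three-dimensional table, as in the refinement of \Cref{thm:ExactTreeDP}, achieves the bound with strictly less bookkeeping.
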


The main result from~\cref{thm:dpmain} follows now almost immediately, given the discussed steps in this section and is presented in \cref{app:dpmainproof}.

%given that the point reduction can be done in $O(nk)$ time while computing a tree embedding on the resulting $O(k)$ points requires $O(k^2 \polylog n)$ and computing the DP takes $O(k^3 \polylog n)$ time. To achieve the claimed with high probability result $10 \log_2 n$ independent runs are performed, their cost determined, and the best one is output. Given that each run has a probability of at most $1/2$ to output a solution twice as large as the expectation, the probability this happens in each run resulting in a bad output is polynomially small, i.e., at most $2^{-10 \log_2 n} = 1/n^{10}$.

\subsection{Linear Programming Relaxation}\label{sec:lprelaxation}

Recall that $\C_1 = (C_1, \mu_1)$ denotes the set $C_1$ of centers  opened in the first stage, and the assignment $\mu_1$ of the points $P_1$ to these centers.
%We further assume that a point $j\in P_1$ is co-located with its center $\mu_1(j)$ of the first solution $\C_1$. 
%Moving every point $j\in C_1$ to $\mu_1(j)$, costs exactly $\cost(P_1, \C_1)$ and it increases
%This can be achieved by losing a small constant-factor in the approximation guarantee. Specifically, it increases the cost of any solution by at most twice the $\cost(P_1, \C_1)$ by observing that this 
We let $w_i$ denote the number of points of $P_1$ that are assigned to $i\in C_1$.
Our goal is to open $k$ centers $C_2$  and find an assignment $\mu_2 : P_2 \rightarrow C_2$ that achieves the following:\\
-- The switching cost \(\swcost(\C_1,\C_2) \) is upper bounded by a given target $S$.\\
-- The connection cost \(\sum_{j \in P_2} d(\mu_2(j), j))\) is minimized.

To formulate this as a linear program, we have a variable $y_i$ for each potential center location $i\in P_2$. The intuition is that $y_i = 1$ if $i\in C_2$ and $y_i= 0$ otherwise. We further have a variable $x_{ij}$ that indicates whether $\mu_2(j) = i$. Note that  a lower bound on the switching cost is $\sum_{i\in C_1} (1-y_i) w_i$ since any center $i\in C_1$ that is closed in $\C_2$ incurs a switching cost of $w_i.$
Our linear program relaxation is the following. 
\begin{mini*}
{}{\sum_{i,j\in P_2} d(i,j) x_{ij}}{}{}
\addConstraint{\sum_{i\in P_2} x_{ij}}{= 1}{\forall j \in P_2}
\addConstraint{y_{i}}{\geq x_{ij}}{\forall i,j \in P_2}
\addConstraint{\sum_{i\in P_2} y_i}{\leq k}{}
\addConstraint{\sum_{i \in C_1} (1-y_i) w_i}{\leq S}{}
\addConstraint{x_{ij}, y_i}{\in [0, 1]}{\forall i, j \in P_2}
\end{mini*}

Here we allow for a total switching cost of $S$ and minimize the connection cost subject to that hard constraint.

We note that any rounding for $k$-median that opens a facility $i$ with probability $y_i$ (and assigns a client $j\in P_1$ to $\mu_1(j)$ if opened) satisfies the switching cost in expectation. Hence, prior results (e.g. the rounding in~\cite{CharikarL12}) imply a constant-factor approximation on the connection cost and a switching cost that is satisfied in expectation but not achieving the strong guarantees of \cref{thm:k-median}. To satisfy the switching cost, for any $i\in C_1$ that remains opened  $i\in C_2$, we make sure that  all the points from $P_1$ that were assigned to $i$ remain assigned to $i$; this is the reason for the additional $\cost(P_1, \C_1)$ in the bound on the cost of solution $\C_2$. 
We now give a high-level overview of the techniques of our rounding algorithm that extends the algorithm of~\cite{CharikarL12} to achieve these stronger guarantees on the switching cost.  
On a high-level the algorithm in~\cite{CharikarL12} works as follows. The variable $y_i$ is interpreted as the ``probability'' that point $i$ should be a center and the algorithm will open each point $i$ as a center with probability $y_i$ subject to the constraint that every point $j$ should have a center opened that is not too far (in expectation) compared to its opening cost $\sum_{i} x_{ij} d(i,j)$ in the linear program. The constraint that every client should have a ``nearby'' center can be abstractly modeled as a matroid constraint, and hence the rounding opens centers subject to a matroid constraint such that  each point $i$ is a center with probability $y_i$. We augment this high-level rounding strategy by formulating a second matroid constraint for the switching cost. This allows us to give a rounding that opens each center $i$ with probability $y_i$ subject to both matroid constraints, thus ensuring both that each point has a nearby center and the stronger guarantees on switching cost. In the detailed description of the algorithm, the fact that the constraints form two matroids is important for the proof of \cref{lem:integral} that relies on the integrality of the matroid intersection polytope. For a full description of the algorithm and proof please refer to Appendix~\ref{app:lprounding}.

%\input{experiments}

%\section*{Ethical Impact Statement}
%The work is theoretical in nature and does not have any significant ethical implications.

\section*{Conclusion and Future Work}
We introduce the notion of label consistency for clustering and provide efficient approximation algorithms for the $k$-center and $k$-median problems. Natural future work could extend our approach to the $k$-means problem or improve the approximation ratio of our solution. In addition, in the current formulation of the label consistency problem, the label of a point is considered to be the center identifier of the cluster to which the point belongs. However, it could also be interesting to consider labels from a set of ids (e.g., $\{1,2, \cdots, k\}$). In this case, all points in the same cluster have the same label. The objective would then be to minimize the total number of label changes while keeping the cost low. We hope that many of the techniques (and results obtained) in this paper can be extended to this alternative variant.% Exploring this variant would be an exciting direction for further study.
\bibliography{refs}

\clearpage
\appendix
\section{Details of label-consistent $k$-center Algorithm}
\label{sec:app-k-center}
In this section, we provide an algorithm that, given two instances $P_1$ and $P_2$ of the $k$-center problem, where $P_1 \subseteq P_2$, a solution $\C_1 = (C_1,\mu_1)$ of the instance $P_1$, and a budget $S$ on the switching cost, outputs a solution $\C_2 = (C_2,\mu_2)$ of $P_2$ such that
\begin{itemize}
    \item $\swcost(\C_1,\C_2) \le S$, and
    \item $\cost_{cen}(P_2,\C_2) \le 6 \cdot \cost_{cen}(P_2,\C_2^*) $
\end{itemize}
where $\C^*_2=(C^*_2,\mu^*_2)$ denotes an (arbitrary) optimum solution for $P_2$ satisfying the given budget of at most $S$ on the switching cost.

\paragraph*{Description of the algorithm. }
%We are given point sets $P_1$ and $P_2$ where $P_1 \subseteq P_2$, and a solution $\C = (C_1,\mu_1)$ to the $k$-center problem on input instance $P_1$. Further, we are given a budget of $S$ for the switching cost.

Our algorithm works in two phases. In the first phase, we grow the cluster around each "old" center ($c\in C_1$). Then, we create new centers from the point set $P_2 \setminus P_1$ that are not covered by the grown clusters around old centers. In the next phase, we close some of the old centers by reassigning the corresponding cluster points to other old centers. Let us now elaborate on each of these two phases.

Before proceeding further, let us first introduce a few notations that we use to describe our algorithm. For each center $c \in C_1$, let us define its \emph{followers set} $\fol_1(c)$ to be the set of points in $P_1$ assigned to the center $c$ (by $\mu_1$). More specifically,
\[
\fol_1(c, r) := \left\{ p \in P_1 \mid \mu_1(p)=c,\; d(c,p)\leq r \right\}
\]
and $\fol_1(c):= \bigcup_r \fol_1(c, r)$.

Let $X$ be the underlying metric space. For any point $x \in X$ and any $r \ge 0$, let $\ball(x,r)$ denote the \emph{ball} of radius $r$ around $x$. Formally,
\[
\ball(x,r) := \left\{y \in X \mid d(x,y) \le r\right\}.
\]

\noindent \textbf{Phase 1: Growing phase and opening new centers. }For ease of presentation, we assume that we know the optimum value of the objective function. (Later in Remark~\ref{rem:assump}, we discuss how to remove this assumption.) Let $R := \cost(P_2, \C^*_2)$, where $\C^*_2=(C^*_2,\mu^*_2)$ denote an (arbitrary) optimum solution for $P_2$ that satisfies the budget of at most $S$ on the switching cost. 

Now, we grow the followers set for each center $c \in C_1$ and cover the remaining points. For that purpose, for each center $c \in C_1$, consider the set $P_2 \cap \ball(c,2R)$ and call these points of $P_2$ \emph{covered}. Let $U$ denote the set of \emph{uncovered} points of $P_2$. Formally,
\[
U:= P_2 \setminus \cup_{c \in C_1}\ball(c,2R).
\]

Next, we open a set of new centers from $U$ to cover all the points in $U$. In particular, we open a center at an arbitrarily chosen point $u \in U$, remove all the points in $\ball(u,2R)$ from $U$, and continue until $U$ becomes empty. 

\iffalse
Next, we open new centers from $U$ using the following simple greedy strategy: 
\begin{enumerate}
    \item Initialize $C_2 = \emptyset$.
    \item Choose a point $u \in U$ arbitrarily, open a center (i.e., add $u$ to $C_2$). Then remove all the points from $U$ that are at a distance at most $2R$ from $u$, i.e., $U \leftarrow U \setminus \ball(u, 2R)$.
    \item Continue until no point is left in $U$.
\end{enumerate}
\fi

Let $k'$ be the number of centers opened at the end of the above greedy procedure (i.e., $k' = |C_2|$). Thus, we are allowed to keep at most $k-k'$ centers from the set of old centers $C_1$. In the next phase, we keep $k-k'$ centers from $C_1$ and close the remaining.

\noindent \textbf{Phase 2: Closing and reassigning phase. }For each $c \in C_1$, let us define its \emph{weight} $w_c$ to be the size of its followers set $\fol_1(c,2R)$, i.e., $w_c : = |\fol_1(c,2R)|$. We say that a center $c \in C_1$ \emph{dominates} another center $c' \in C_1$ iff $c' \in \ball(c, 2R)$. Let $D(c)$ denote the set of all the centers dominated by $c$, i.e., 
\[
D(c) := \left\{c' \in C_1 \mid c' \in \ball(c, 2R) \right\}.
\]

\iffalse
Next, we use the following procedure to keep (at most) $k-k'$ old centers open and close the remaining, and then reassign the points in the followers set of the closed centers.
\begin{enumerate}
    \item Initialize all the centers in $C_1$ as unmarked.
    \item Sort the centers $c \in C_1$ in the non-increasing order of their weights $w_c$.
    \item Process each $c \in C_1$ in the above sorted order until all the centers in $C_1$ are marked
    \begin{enumerate}
        %\item Consider $\ball(c, 5R)$,
        %\item If $\exists c' \in D(c)$ such that $c'$ is unmarked, then open $c$ (add $c$ in $C_2$) and mark all the centers in $D(c)$.
        \item If $c$ is unmarked, then open $c$ (add $c$ in $C_2$) and mark all the centers in $D(c)$.
    \end{enumerate}
    \item Open (add to $C_2$) all the centers in $C_1$ in the sorted order until $|C_2|=k$.
\end{enumerate}
\fi

Next, we keep (at most) $k-k'$ old centers open and close the remaining, and then reassign the points in the followers set of the closed centers. We start with all the centers in $C_1$ as unmarked. We then sort the centers $c \in C_1$ in the non-increasing order of their weights $w_c$. Next, we process them one by one in that sorted order. If a center is unmarked so far, we keep that center "temporarily" open (by adding them in a new set $T$) and mark (and close) all the centers that it dominates. Next, we process each temporarily opened center $c$. Consider the set $\ball(c,R)$ and let $c'$ be a center in $C_1\cap \ball(c,R)$ that has the highest weight among all the centers in $C_1 \cap \ball(c,R)$ (note that $c'$ could be different from $c$ because it could be dominated by a third center in $T$). Then, we add $c'$ to $C_2$ and remove $c$ from $T$. In the end, if the number of open centers is less than $k$, we open closed centers (from $C_1$) one by one in the above-mentioned sorted order until the number of open centers is $k$.

We describe the algorithm formally in Algorithm~\ref{alg:k-center} and provide an example in \Cref{fig:kcenter}. Let $C_2$ denote the set of centers opened by our algorithm. We define the assignment function $\mu_2$ for the center set $C_2$ as follows: For each point $p \in P_2$, if $\mu_1(p) \in C_2$ and $d(p,\mu_1(p)) \le 2R$, then set $\mu_2(p) = \mu_1(p)$; otherwise, set $\mu_2(p)$ to be the closest (breaking ties arbitrarily) center point in $C_2$.

For ease of presentation, in the algorithm's description, we assumed that the optimum objective value $R$ is known. This is a standard assumption and can be removed in a straightforward way. We explain the details and also running time analysis later.

\begin{algorithm}
	\begin{algorithmic}[1]
            \REQUIRE Point sets $P_1,P_2$ where $P_1 \subseteq P_2$, a $k$-center clustering $\C_1=(C_1,\mu_1)$ of $P_1$, and a non-negative value $R$.
		
		\ENSURE A $k$-center clustering $\C_2=(C_2,\mu_2)$ of $P_2$.
		
		% \vspace{1mm}
		% \hrule\vspace{1mm}
		
		\noindent \textbf{\underline{Phase 1:}}
            \STATE Construct the set $U = P_2 \setminus \cup_{c \in C_1}\ball(c,2R)$.
            \STATE Initialize $C_2\leftarrow \emptyset$.
            \WHILE{$U \ne \emptyset$}
                \STATE Choose a point $u \in U$ arbitrarily, and add $u$ to $C_2$. 
                \STATE $U \leftarrow U \setminus \ball(u, 2R)$.
		\ENDWHILE
  
            \noindent \textbf{\underline{Phase 2:}}
            \STATE Initialize all the centers in $C_1$ as unmarked.
            \STATE Sort the centers $c \in C_1$ in the non-increasing order of their weights $w_c = \left| \fol_1(c,2R) \right|$.
            \STATE Initialize $T \leftarrow \emptyset$.
            \WHILE{there exists a center in $C_1$ that is not marked}
                \STATE Let $c$ be the first point (in the sorted order) in $C_1$ that is unmarked.
                \STATE Add $c$ to $T$ and mark all the centers in $D(c)$.
            \ENDWHILE
            \FORALL{$c \in T$}
                \STATE Let $c' \in C_1 $ be the center with maximum weight (breaking ties arbitrarily) in $\ball(c,R)$, i.e., $c' = \arg \max_{i \in C_1 \cap \ball(c,R)}w_i$.
                \STATE Add $c'$ to $C_2$.
            \ENDFOR
            \WHILE{$|C_2| \ne k$}
                \STATE Let $c$ be the first point (in the sorted order with respect to the weight $w_c$) in $C_1$ that is not in $C_2$.
                \STATE Add $c$ to $C_2$.
            \ENDWHILE 

            \STATE Set $\mu_2$ as follows: For each point $p \in P_2$, if $\mu_1(p) \in C_2$ and $d(p,\mu_1(p)) \le 2R$, then set $\mu_2(p) = \mu_1(p)$; otherwise, set $\mu_2(p)$ to be the closest (breaking ties arbitrarily) center point in $C_2$.
		\STATE Return $\C_2 = (C_2 , \mu_2)$.

		\caption{Label consistent $k$-center clustering algorithm}
		\label{alg:k-center}
	\end{algorithmic}
\end{algorithm}

\iffalse
    For ease of presentation, in the algorithm's description, we assumed that the optimum objective value $R$ is known. This is a standard assumption and can be removed in a straightforward way. We explain the details and also running time analysis in \cref{sec:app-k-median}.
\fi

\paragraph*{Approximation guarantee. }Let us first consider the set of points in $P_2$ that are covered by the "old" centers (in $C_1$). We define
\[
O:=\cup_{c \in C_1}  \ball(c,R) \cap P_2 
\]
and let $N = P_2 \setminus O$. Let $\C^*_2=(C^*_2,\mu^*_2)$ denote an (arbitrary) optimum solution for $P_2$ such that $\swcost(\C_1,\C^*_2) \le S$. 

We define the followers sets of centers $c \in C_2$ ($c^* \in C^*_2$), denoted by $\fol_2(c)$ ($\fol^*_2(c^*)$), similar to $\fol_1(c)$ (the followers sets of centers $c \in C_1$). More specifically,
\begin{align*}
\text{for any }c \in C_2,\; \fol_2(c) &:= \left\{ p \in P_2 \mid \mu_2(p)=c \right\},\\
 \text{for any }c \in C^*_2,\;   \fol^*_2(c) &:= \left\{ p \in P_2 \mid \mu^*_2(p)=c \right\}.
\end{align*}

\begin{lemma}
    \label{lem:comp-new-center}
    $\left| C_2 \cap N \right| \le \left| C^*_2 \cap N \right|$.
\end{lemma}
\begin{proof}
    Consider any center $c \in C_2 \cap N$. Recall that in Phase 1, we have defined $U:=P_2 \setminus \cup_{c \in C_1} \ball(c,2R)$. Thus, by the construction of $C_2$, $c \in U$.
    
    We know that there exists a center $c^* \in C^*_2$ such that $c$ lies in the followers set of $c^*$, denoted by $\fol^*_2(c^*)$. Thus $d (c,c^*) \le R$. It then follows that $c^* \in N$; otherwise, there exists some $c'' \in C_1$ such that $d(c^*,c'') \le R$. Then, by the triangle inequality, 
    $
    d(c,c'') \le d(c,c^*) + d(c^*,c'') \le 2R
    $
    contradicting the fact that $c \in U$.

    Further, by the construction of $C_2$ (see Phase 1), for any two distinct $c,c' \in C_2 \cap N$, $d(c,c') > 2R$. Thus, for each $c \in C_2 \cap N$,  there exists (at least) a distinct $c^* \in C^*_2 \cap N$, implying $\left| C_2 \cap N \right| \le \left| C^*_2 \cap N \right|$.
\end{proof}

We now use the above lemma to argue that $\C_2 = (C_2,\mu_2)$ is a 6-approximate solution to the $k$-center problem on $P_2$. Recall, $R = \cost(P_2, \C^*_2)$.

\begin{lemma}
    \label{lem:k-cen-feasible}
    $\cost_{cen}(P_2,\C_2) \le 6 R.$
\end{lemma}
\begin{proof}
We first argue that in Phase 2 of our algorithm, all the centers $c \in C_1$ get marked after the addition of (at most) $ k - k'$ unmarked centers in $T$ (recall, $k'$ is the number of centers opened in Phase 1).

Recall that in Phase 1, we open $k'$ centers from the set $U = P_2 \setminus \cup_{c \in C_1} \ball(c,2R)$. Further, by the construction of $C_2$, for any center $c \in C_2 \cap N$, $c \in U$. Then, by Lemma~\ref{lem:comp-new-center},
$
\left| C^*_2 \cap O \right| \le k - \left| C_2 \cap N \right| = k - k'.
$
Further, observe that in the solution $\C^*_2$, any $c \in C_1$ must be covered by some center $c^* \in C^*_2 \cap O$. In other words, $c \in \fol^*_2(c^*)$, for some $c^* \in C^*_2 \cap O$. Thus, in Phase 2 of our algorithm, when we process a center $c \in C_1 \cap \fol^*_2(c^*)$ (where $c^* \in C^*_2 \cap O$) all the centers in the set $ C_1 \cap \fol^*_2(c^*)$ will be in the dependent set $D(c)$ and thus will be marked (since for any $c, c' \in C_1 \cap \fol^*_2(c^*)$, by the triangle inequality, $d(c,c') \le d(c,c^*) + d(c^*,c')\le 2R$). Hence, all the centers $c \in C_1$ get marked after at most $\left| C^*_2 \cap O \right| \le k - k'$ unmarked centers are added to $T$. Hence, $\C_2$ is a valid $k$-center solution for $P_2$.

Next, let us consider any center $c^* \in C^*_2 \cap \left( \cup_{c \in C_1} \ball(c,2R) \right)$. So there exists some $c \in C_1$ such that $d(c,c^*) \le 2R$. Then for all $p \in \fol^*_2(c^*)$, by the triangle inequality,
$
d(p,c) \le d(p,c^*) + d(c^*,c) \le R + 2R = 3R.
$

Now, either $c$ is added to the set $T$ (in Phase 2), in which case, there is a center $c_2 \in C_2$ (added during the execution of lines 14-16 of Algorithm~\ref{alg:k-center}) such that $d(c,c_2)\le R$. Then by the triangle inequality, $d(p,c_2) \le 3R + R = 4R$.

Otherwise (i.e., $c$ is not added to $T$), by Phase 2 of our algorithm, there must exist some $c' \in C_1$  such that
\begin{itemize}
    \item $c'$ is processed before $c$ in Phase 2 and is added to $T$, and
    \item $c \in D(c')$ (and thus gets marked while processing $c'$).
\end{itemize}
Then for all $p \in \fol^*_2(c^*)$, by the triangle inequality,
\begin{align*}
    d(p,c') &\le d(p,c) + d(c,c')
    \le 3R + 2R = 5R.
\end{align*}

Let $c_2$ be the center added to $C_2$ during the execution of lines 14-16 of Algorithm~\ref{alg:k-center} for $c' \in T$. Then, by the triangle inequality, $d(p,c_2) \le 5R + R = 6R$.

Next, let us consider any center $c^* \in C^*_2 \cap U$. By the construction of $C_2$ (in Phase 1 of our algorithm), there must exist a center $c \in C_2$ such that $d(c^*,c) \le 2R$. Then, for all $p \in \fol^*_2(c^*)$, again by the triangle inequality, $d(p,c) \le 3R$. This concludes the proof.
\end{proof}

It now only remains to show that the solution $\C_2$ satisfies the budget constraint of at most $S$ on the switching cost.

\begin{lemma}
    \label{lem:k-cen-switch}
    $\swcost(\C_1,\C_2) \le S$.
\end{lemma}

\begin{proof}
    Recall that 
    $
\swcost(\C_1,\C_2) = \left| \left\{ i \in P_1 \mid \mu_1(i) \neq \mu_2(i) \right\} \right|.
    $
It is straightforward to observe that given a set of centers $C_2$, to minimize the switching cost, for all the centers $c \in C_1 \cap C_2$, it must hold that for any $p \in \fol_1(c,2R)$, $\mu_1(p) = \mu_2(p)$. Thus, without loss of generality, we can say that $\swcost(\C_1,\C_2)$ is the number of points in $P_1$ that need to be (re-)assigned to a new center (in $C_2 \setminus C_1$) due to the closure of the corresponding centers in the solution $\C_2$, i.e.,
\begin{equation}
    \label{eq:switch}
    \swcost(\C_1,\C_2) = |P_1| - \sum_{c \in C_1 \cap C_2} w_c
\end{equation}

where $w_c = \left| \fol_1(c, 2 R) \right|$. Similarly,
\begin{align}
    \label{eq:switch-opt}
    \swcost(\C_1,\C^*_2) &= |P_1| - \sum_{c \in C_1 \cap C^*_2} |\fol_1(c)| \nonumber\\
    &\ge |P_1| - \sum_{c \in C_1 \cap C^*_2} w_c
\end{align}
where the last inequality follows since $\fol_1(c,2R) \subseteq \fol_1(c)$.

Next, we argue that $\sum_{c\in C_1 \cap C_2} w_c \ge \sum_{c \in C_1 \cap C^*_2} w_c$, which by~\cref{eq:switch} and~\cref{eq:switch-opt} implies $\swcost(\C_1,\C_2) \le \swcost(\C_1,\C^*_2)$.

For that purpose, let us consider the set $T$ after the execution of the while loop in lines 10-13. Let us consider any $c \in T$. We argue that there must exist $c^*_2 \in C^*_2$ such that
\begin{itemize}
    \item $c^*_2 \in \ball(c,R)$,
    \item $c^*_2 \in C^*_2 \cap O$,
    \item $c^*_2 \not \in \cup_{c \ne c' \in T} \ball(c',R)$.
\end{itemize}
The first item is straightforward to see. The second item is true by definition of $O$. To see the third item, for the sake of contradiction, let us assume $c^*_2 \in \ball(c', R)$ for some $c' \ne c$ in $T$. Without loss of generality, assume that $c'$ is processed after $c$ during the execution of lines 10-13 of Algorithm~\ref{alg:k-center}. By the triangle inequality, $d(c,c') \le d(c,c^*_2) + d(c^*_2,c') \le R+R = 2R$, which is not possible because then $c'$ will be marked while $c$ is being processed (since $c' \in D(c)$) and thus $c'$ cannot be added to $T$, leading to a contradiction. In the rest of the proof we will say that $c_2$ is related to $c$

As an immediate corollary of~\cref{lem:comp-new-center}, we get that $|C_2 \cap C_1| \ge |C^*_2 \cap O|$ and so $|C_2 \cap C_1| \ge |C^*_2 \cap C_1|$. So every $c^*_2 \in C^*_2 \cap C_1$ is related to some center in $C_2 \cap C_1$. Furthermore, consider a $c \in T$ and let $c^*_2 \in C^*_2$ be such that (i) $c^*_2 \in \ball(c,R)$, and (ii) $c^*_2 \not \in \cup_{c \ne c' \in T} \ball(c',R)$ and such that $c^*_2 \in C_1$. While processing $c \in T$ in lines 14-17 of Algorithm~\ref{alg:k-center}, suppose we add the center $c_2 \in C_2$ (note, $c_2 \in C_1 \cap C_2$). Then, by the construction, $w_{c_2} \ge w_{c^*_2}$. 

Furthermore, by lines 18-21 of Algorithm~\ref{alg:k-center}, the remaining $k-k'-|T|$ centers of $C_1 \cap C_2$ are added in the non-increasing order of their weights. Hence, we conclude that 
\[
\sum_{c \in C_1 \cap C_2} w_c \ge \sum_{c \in C_1 \cap C^*_2} w_c.
\]
It now directly follows from~\cref{eq:switch} and~\cref{eq:switch-opt} that 
\[
\swcost(\C_1,\C_2) \le \swcost(\C_1,\C^*_2) \le S
\]
which concludes the proof.
\end{proof}

\iffalse

Let us consider an (arbitrary) optimal consistent clustering $(C_2^*,\mu_2^*)$ for $P_2$. More specifically, $(C^*_2, \mu^*_2)$ is a solution with maximum consistency with $(C_1, \mu_1$), where $\cost(P_2, (C^*_2, \mu^*_2)) \leq \min_{(C, \mu)} \cost(P_2, (C, \mu))$. Further, let $\fol^* : C_2^* \to C_1$ attains maximum consistency for $\mu_2^*$, i.e.,
\[
\fol^* = \arg \max_{f: C_2^* \to C_1} \sum_{p \in P_1} \mathbbm{1}_{\mu_1(p)=f(\mu_2^*(p))}.
\]
Also, let $|C_2^*| = |C_1| = k$.

Let us now consider the clusters induced by $\mu^*_2$ that have no intersection with the corresponding clusters induced by $\mu_1$. Formally,
\[
D:=\left\{i \in C_1 \mid \not \exists p \in P_1, \mu_1(p)=f(\mu^*_2(p))\right\}.
\]

Next, let us define the neighbor set $N_2(i)$ for each center $i\in C^*_2$ the same way as for centers in $C_1$. It is then easy to observe the following.

\begin{claim}
\label{clm:opt-consist}
$\con(\mu_1,\mu^*_2) \le |P_1| - \sum_{i \in D} \clcst(i)$.
\end{claim}

\fi

Next, we remove the assumption of knowing the optimum objective value $R$ and analyze the running time.
\begin{remark}
    \label{rem:assump}
    For ease of presentation, in the algorithm's description, we assumed that the optimum objective value $R$ is known. This is a standard assumption and can be removed in a straightforward way. Since centers must be from the input set $P_2$, there are only ${|P_2| \choose 2}$ choices on the value of the optimum objective. Further, once we get a solution $\C_2$ for $P_2$, we can compute the value of $\swcost(\C_1,\C_2)$ in $O(n)$ time. Thus, we can perform a Binary search over all such possibilities as $R$ and finally return the solution that achieves the minimum objective value while satisfying the budget constraint of at most $S$ on the switching cost.
\end{remark}

\paragraph*{Time complexity. }Assuming computing distance between any two points $x,y$ in the metric space, i.e., the value of $d(x,y)$, takes $O(1)$ time, computing all the pairwise distances among the points in $P_2$ takes $O(n^2)$ time. Further, given the solution $\C_1$, computing the size of the followers set of all $c \in C_1$ takes $O(kn)$ time. For each choice of the value of $R$, constructing the set $U$ and opening centers in Phase 1 take $O(kn)$ time. In Phase 2, constructing the set $D(c)$ for all $c \in C_1$ takes $O(k^2)$ time. Further, in Phase 2, deciding which of the old centers (from the set $C_1$) to keep open (first adding centers to the set $T$ and then deciding which centers to add in $C_2$) takes $O(k^2)$ time. Hence, for each choice of $R$, Algorithm~\ref{alg:k-center} takes $O(kn)$ time. Thus, the overall time taken by performing a Binary search over all the possible choices of $R$ (as described in Remark~\ref{rem:assump}) is $O(n^2 + kn \log n)$. %Let us now describe briefly how to avoid all pairwise distance computation (that leads to an $O(n^2)$ factor in the running time) by slightly increasing the approximation factor. For that purpose, fix any constant $\eps > 0$. Then we try $R = \cost_{cen}(P_1, \C_1), (1+\eps) \cost_{cen}(P_1, \C_1), (1+\eps)^2 \cost_{cen}(P_1, \C_1),\cdots$ one by one and return the first solution that achieves the objective value of at most $6R$ while satisfying the budget constraint of at most $S$ on the switching cost. It is straightforward to see that we need to try at most $O(\log_{1+\eps} n)$ many choices of $R$ (assuming the aspect ratio to be at most polynomial in $n$), and for each such choice, Algorithm~\ref{alg:k-center} takes $O(kn)$ time, and hence the overall running time would be $O(kn \log n)$, where $O(\cdot)$ notation hides the dependency on $1/\epsilon$.
\section{Description of Fast Rounded DP and Analysis}\label{sec:dpfast}

Our final dynamic program has three dimensions. The first one indicates the name of the node on the tree, the second one is the number of centers we allow to be opened in the sub-tree of the node, and the last one is the maximum allowed connection cost of nodes connected within the sub-tree. As we explain later, we restrict the values of this connection cost dimension to a small set of rounded values in the end. The goal of the DP is to find the minimum switching cost solution that assigns all the points to a center in the sub-tree with the discussed requirements. If no such solution exists, we define the optimal switching cost as infinite.
\begin{align*}
 dp[\text{id}][k][D] = & \text{ Min switching cost in the subtree of node ``id'',} \\ & \text{ with at most $k$ open centers} \\ & \text{ and (rounded) connection cost at most $D$.}
\end{align*}

The initialization of the DP for the leaves is set to the weight of the point in the leaf if it is a new center 
%\Dnote{why do we need to set the switching cost to weight for new centers and not 0?} 
and to zero otherwise (recall that $ k \geq 1$), independent of the connection cost $D$.  % The reason for this is that for $k=0$ no nodes are connected within the sub-tree making the connection cost (within) zero and all nodes need to leave the cluster. This causes  the switching cost and therefore the DP value equal to the weight of the point in the leaf if this point corresponds to a center in $C_1$ and zero otherwise. If $k \geq 1$ the point in the leaf is opened making both the connection cost and switching cost zero. 

For the internal nodes of the tree, one needs to distribute the number of open centers and (rounded) connection costs between the two children of the node.  %If ``id'' has one child ``l'', the case is trivial and $dp[\text{id}][k][c] = dp[\text{l}][k][c]$.  
We describe how to compute $dp[\text{id}][k][D]$ for a node ``id'' with its two children named ``l'' and ``r'' and values $k$ and $D$. Let $n_l$ be the total weight of points in ``l'' and $n'_l$ be the total weight of points corresponding to centers in $C_1$ (old centers). We define $n_r$ and $n'_r$ analogously and let $d_{id}$ be the diameter of the points of ``id'', as computed by the tree embedding. The first option is to open zero centers in the ``l'' subtree and connect all points to one of the centers opened in the ``r'' subtree. The cost, in this case, is
\begin{align*}
    & dp[\text{r}][k][D - n_l \cdot d_{id}] + n'_l
\end{align*}
because the connection cost of moving the points in ``l'' is $n_l \cdot d_{id}$ leaving $D - n_l \cdot d_{id}$ for connection costs within ``r''. Furthermore, a switching cost of $n'_l$ is caused by moving the old points out of ``l'', in addition to the switching costs within ``r''. The analog cost of 
\begin{align*}
    & dp[\text{l}][k][D - n_r \cdot d_{id}] + n'_r
\end{align*}
holds for the case that we do not open any center on ``r''. The only remaining options are to open $1 \leq k' \leq k-1$ centers on the left node ``l'' with connection cost $0 \leq D' \leq D$ and $k-k'$ centers with the connection cost $D-D'$ on ``r''. The costs for these options are

\[
dp[\text{l}][k'][D'] + dp[\text{r}][k-k'][D-D'].
\]

The value of $dp[\text{id}][k][D]$ is the lowest among all the discussed options. 

This DP still computes the exact optimal switching cost. The time required is also still polynomial in $n$ because one of the dimensions optimized is the connection cost. Fortunately, we can afford a constant approximation in the connection cost given that the tree embedding already produces an $O(\log k)$-approximation in distances and, therefore, connection costs anyway. We, therefore, round connection cost values to powers of $(1+\eps)$ for some sufficiently small $\eps$, i.e., if at an internal node id, it is feasible to open $k'$ centers cause connection cost at most $D$ and switching cost $s = dp[\text{id}][k'][D]$ then we round $D$ up to the next power of $(1 + \eps)$, i.e., $\tilde{D} = (1+\eps)^{\lceil \log_{1+\eps} D \rceil}$ and note the feasibility of switching cost $s$, connection cost $\tilde{D}$ with opening $k'$ centers in the sub-tree below node ``id'' (in an implementation one of course simply stores the integer $\lceil \log_{1+\eps} D \rceil$ instead of $(1+\eps)^{\lceil \log_{1+\eps} D \rceil}$). Assuming a polynomial aspect ratio for distances and, therefore, also costs, this leaves only $O(\ \frac {\log n}{\eps})$ different cost values one needs to consider. It remains to analyze the impact of rounding to the power of $(1+\eps)$ at every step of the DP has on the approximation of the produced solution and how large or small of an $\eps$ one can choose, given that rounding errors can accumulate. 

In particular, if two values $D$ and $D'$ are rounded individually, then added up, and then rounded to the next power of $(1+\eps)$ again, the resulted value can differ from $D + D'$ by up to a $(1 + \eps)^2$ factor. This means that while rounding produces only a $(1+\eps)$ factor approximation in the leaves at level $\ell$ of the tree, this approximation error becomes as large as $(1 + \eps)^\ell$. However, by choosing $\eps = \frac{1}{101 \ell_{max}}$ where $\ell_{max} = O(\log k \log \Delta) = O(\log^2 n)$ is the depth of the tree produced by the tree embedding (assuming a polynomial aspect ratio $\Delta$), we can guarantee that the largest approximation is at most one percent because $(1 + \eps)^\ell \leq (1 + \frac{1}{101 \ell_{max}})^{\ell_{max}} \leq 1.01$.

Overall, the rounded DP is filling a three-dimensional array with $t=O(k)$ different possibilities for the first dimension, one for each node in the tree, $k+1$ different possibilities for how many centers (between $0$ and $k$) are opened, and $\frac{\log n}{\eps} = O(\polylog n)$ many possibilities for the rounded maximum connection cost. This results in a DP of $O(k^2 \polylog n)$ size. 
As described precisely above, to compute $dp[\text{id}][k'][D]$ there are up to $k' \leq k$ different possibilities to split up the $k'$ open centers between its two children and now $O(\polylog n)$ many possibilities for rounded connection costs in the left and right child resulting in a new rounded cost equal or below $D$ at the node $id$. Each of these options can be read off from the sum of two previous DP values, and the best of these $O(k' \polylog n)$ options is therefore computed in $O(k' \polylog n)$ time at each of the $O(k^2 \polylog n)$ nodes. The total computation time is, therefore, $O(k^3 \polylog n)$, as claimed. This completes the description and analysis of the accelerated DP mentioned in~\cref{thm:RoundedTreeDP}.

Next, we provide the pseudocode of our algorithm referred to in~\cref{thm:dpmain}. Although in the pseudocode below (\cref{alg:k-median}), we return the cost of a $k$-median clustering $\C_2$ of $P_2$, it can be amended straightforwardly to return the corresponding clustering.

\begin{algorithm}
	\begin{algorithmic}[1]
            \REQUIRE Point sets $P_1,P_2$ where $P_1 \subseteq P_2$, a $k$-median clustering $\C_1=(C_1,\mu_1)$ of $P_1$, and a non-negative number $S$.
		
		\ENSURE The cost of a $k$-median clustering $\C_2$ of $P_2$ such that $\swcost(\C_1,\C_2) \le S$.
		
		% \vspace{1mm}
		% \hrule\vspace{1mm}
            \STATE For each $i \in C_1$, assign weight $w_i = |\{j \in P_1 \mid \mu_1(j)=i\}|$.

            \STATE Perform $k$-median++ algorithm on the set $P_2 \setminus P_1$ and get a solution $\tilde{\C}_2=(\tilde{C}_2 , \tilde{\mu}_2)$, where $|\tilde{C}_2| = O(k)$.

            \STATE For each $i \in \tilde{C}_2$, assign weight $w_i = |\{j \in P_2 \setminus P_1 \mid \tilde{\mu}_2(j)=i\}|$.

            \STATE Consider $C= C_1 \cup \tilde{C}_2$ with each $i \in C$ having weight $w_i$. 
            
            \STATE Embed the points in $C$ into a tree of depth $\ell_{max}=O(\log k \log \Delta)$ using~\cite{fakcharoenphol2003tight}. Let $\mathcal{T}$ be that tree with root node 'root'.
            
            \STATE Set $\eps = 1/101 \ell_{max}$.
            
            \STATE $\mathcal{R} := \left\{0, 1, (1+\eps), (1+\eps)^2,\cdots,(1+\eps)^{\lceil \log_{1+\eps} \Delta \rceil}\right\}$.

            \STATE Initialize $dp[id][j][D] \leftarrow \infty$ for all node ``id'' in $\mathcal{T}$, for all $j \in \{1,2, \cdots, k\}$, for all $D \in \mathcal{R}$.

            \STATE \textbf{Base Case:} Set $dp[id][j][D] \leftarrow 0$ for all leaves ``id'' in $\mathcal{T}$, for all $j \in \{1,2, \cdots, k\}$, for all $D \in \mathcal{R}$.
            
            \FOR{$\ell=\ell_{max}-1, \cdots, 0$}
            		\FORALL{node ``id'' at level $\ell$ of tree $\mathcal{T}$}
            			\FOR{$j=1,\cdots,k$}
            				\FORALL{$D \in \mathcal{R}$}
            					\STATE Let ``l'' and ``r'' be the left and right children of the node 'id.'
            					
            					\STATE Let $n_l$ and $n_r$ be the total weights of the points (leaves) $\in C$ in the subtree of ``l'' and ``r'', respectively.
            					
            					\STATE Let $n'_l$ and $n'_r$ be the total weights of the points (leaves)$\in C_1$ in the subtree of ``l'' and ``r'', respectively.
            					
            					\STATE Let $d_{id}$ be the diameter of the points in the subtree of the node ``id''.

                                    \STATE Let $D_l$ and $D_r$ be the smallest integers in $\mathcal R$ that are at least $D - n_l\cdot d_{id}$ and $D - n_r\cdot d_{id}$, respectively.

            					\STATE $dp[id][j][D] = \min\left\{\left(dp[l][j][D_r] + n'_r\right), \left(dp[r][j][D_l] + n'_l\right)\right\}$.
            					
            					\FOR{$k' = 1,\cdots, j-1$}
            						\FORALL{$D' \in \mathcal{R}$ such that $ D' \le D$}

                                            \STATE Let $D''$ be the smallest integer in $\mathcal R$ that is at least $D - D'$.
                                            
            							\STATE $dp[id][j][D] = \min\left\{dp[id][j][D], \left(dp[l][k'][D'] + dp[r][j - k'][D'']\right)\right\}$.
            						\ENDFOR
            					\ENDFOR

            				\ENDFOR
            			\ENDFOR
            		\ENDFOR
            \ENDFOR
            
		\STATE Return Minimum $D$ such that $dp[root][k][D] \le S$.

		\caption{$O(\log k)$ approximation algorithm for the label consistent $k$-median clustering}
		\label{alg:k-median}
	\end{algorithmic}
\end{algorithm}

\section{Proof of \cref{thm:dpmain}} \label{app:dpmainproof}
For the sake of convenience, let us start by restating the theorem:

\thmdpmain*

\begin{proof}
Let $\alpha$ represent the constant approximation guarantee of the algorithm used to find $O(k)$ centers from the points in $P_2\setminus P_1$ (denoted by $\mathcal{A}$ in the main part). Moreover let $\beta, C_f$ denote the approximation factor and the final solution of our Dynamic Programming algorithm, which is $O(\log k)$-approximate due to \cref{thm:RoundedTreeDP}. Also let $P_3$ denote the weighted points obtained by moving the points in $P_1$ and $P_2 \setminus P_1$ to $C_1$ and the centers computed on $P_2 \setminus P_1$ , respectively. Recall that we run our DP on $P_3$. We use $\OPT(X)$ for a set of points $X$ to denote the cost of the optimum solution on the points in $X$. We observe that
\begin{align} \label{ine:dpapprox1}
    \cost(P_2, C_f) \leq \alpha \OPT(P_2) + \cost(P_1, C_1) + \cost(P_3, C_f) \,,
\end{align}
where the first term is an upper bound on the cost of moving the points in $P_2 \setminus P_1$, the second term is an upper bound on the cost of moving the points in $P_1$, and the third term is the cost of the solution of DP on $P_3$.
Similarly we get that
\begin{align} \label{ine:dpapprox2}
    \OPT(P_3) \leq (\alpha\OPT(P_2) + \cost(P_1, C_1)) + \OPT(P_2)\,,
\end{align}
where the first term is due to the cost of moving points from $P_2$ and $P_1$ (as above), and the second term is just the optimal cost of the original points, $P_2$. From the definition of $\beta$ we have that $\cost(P_3, C_f) \leq \beta \OPT(P_3)$. Combining with \cref{ine:dpapprox2} we get that
\begin{align*}
    \E[\cost(P_3, C_f)] \leq  \beta\left[(1+\alpha) \OPT(P_2) + \cost(P_1, C_1) \right]
\end{align*}
Combining with \cref{ine:dpapprox1} we get that 
\begin{align*}
    \E[\cost(P_2, C_f)] \leq \alpha \OPT(P_2) + \cost(P_1, C_1) + \beta\left[(1+\alpha) \OPT(P_2) + \cost(P_1, C_1) \right]\,.
\end{align*}
Since $a \in O(1)$ and $b \in O(\log k)$, we get that
\begin{align*}
    \E[\cost(P_2, C_f)] \leq O(\log k) \left(\OPT(P_2) + \cost(P_1, C_1) \right)
\end{align*}
    
We continue by analyzing the running time and the success probability. The point reduction from $P_2$ to $P_3$ can be done in $O(nk)$ time, due to the running time of $k$-median++. Computing a tree embedding on the resulting $O(k)$ points requires $O(k^2 \polylog n)$ time and computing the DP takes $O(k^3 \polylog n)$ time. To achieve the claimed result with high probability, $10 \log_2 n$ independent runs are performed, their cost is computed, and the best one is output. Given that each run has a probability of at most $1/2$ to output a solution twice as large as the expectation, the probability this happens in each run resulting in a bad output is polynomially small, i.e., at most $2^{-10 \log_2 n} = 1/n^{10}$. The bound on the switching cost guaranteed by \cref{thm:RoundedTreeDP} remains valid as we do not modify the assignments made by DP.
\end{proof}

\section{Rounding $k$-Median Linear Programming Relaxation}\label{app:lprounding}
Let $(x^*,y^*)$ be an optimal solution to the linear program presented in~\cref{sec:lprelaxation}. We give efficient rounding procedures of the fractional solution $(x^*, y^*)$ that yield \cref{thm:k-median}. 

We build on the rounding proposed in~\cite{CharikarL12} and extend it to deal with the additional constraint on the switching cost. We use the following notation.   For a client $j\in P_2$, we let $\dav(j) = \sum_{i\in P_2} x^*_{ij} d(i,j)$ be the average connection cost of this client in the LP solution. For $j\in P_2, r\in \mathbb{R}$, we also define $\ball(j, r) = \{j'\in P_2 \mid d(j', j) \leq r\}$ to be the set of clients in the ball of radii $r$ centered at $j$. For a general subset $U \subseteq P_2$, we let $\vol(U) = \sum_{i\in U} y^*_i$ be the \emph{volume} of that set. 

\paragraph*{Filtering and a matching of bundles.} Using fairly standard techniques for these problems, such as filtering and bundling, the authors in~\cite{CharikarL12} define the following sets:
\begin{itemize}
    \item A subset of $P'_2 \subseteq P_2$ of the clients satisfying
    \begin{itemize}
        \item[(i)] For any $j, j'\in P'_2$ with $j\neq j'$ we have \\$d(j,j')\geq 4 \max\{\dav(j), \dav(j')\}$.
        \item[(ii)] For any $j' \in P_2 \setminus P_2'$, there is a client $j\in P_2'$ such that $\dav(j) \leq \dav(j')$ and $d(j,j') \leq 4 \dav(j')$.
    \end{itemize}
\end{itemize}
In words, the clients in $P_2'$ are ``far away'' from each other, and at the same time, any client not in $P_2'$ is close to some client in $P_2'$. This is Lemma~1 in~\cite{CharikarL12}. 
Moreover, with each $j\in P_2'$, define $R_j = \frac{1}{2} \min_{j'\in P_2', j' \neq j} d(j,j')$ to be half the distance of $j$ to its nearest neighbor in $P_2'$. 
The authors define, for each $j\in P_2'$, a bundle $U_j \subseteq \ball(j,  R_j)$ with the following guarantees: 
\begin{itemize}
    \item[(i)] $1/2 \leq (1- d_{av}(j)/R_j) \leq \vol(U_j) \leq 1$ for all $j\in P'$ and
    \item[(ii)] $U_j \cap U_{j'} = \emptyset$ for all $j,j' \in P', j \neq j'$.
\end{itemize}
This is Lemma~2 in that paper. 

Finally, they construct a matching $M$ over the clients in $P_2'$ greedily as follows: while there are at least two unmatched clients in $P_2'$, pick the two closest unmatched clients in $P_2'$ and match them.

\paragraph*{New rounding procedure.}
From~\cite{CharikarL12}, we thus have the clients $P_2' \subseteq P_2$, bundles $(U_j)_{j\in P_2'}$, distances $R_j$, and the matching $M$. The basic rounding of~\cite{CharikarL12} is obtained by defining a distribution over centers $C_2$ that satisfies the following properties:
\begin{enumerate}
    \item  We always have $|C_2| \leq k$.
    \item For a point $i\in P_2$, we have $\Pr[i \in C_2] = y_i$.
    \item For a bundle $U_j$ with $j\in P'_2$, we always have $|U_j \cap C_2| \leq 1$.
    \item For two bundles $U_j, U_{j'}$ that are matched by $M$, we always have $\left|(U_j \cup U_{j'}) \cap C_2\right| \geq 1. $
\end{enumerate}
These properties are sufficient to achieve a constant-factor approximation on the connection cost (and satisfying the switching cost in expectation). 
%The authors of \cite{CharikarL12} further ensured that their distribution over solutions enjoyed negative correlation properties in order to optimize the approximation
Our rounding differs in that we give a distribution over solutions $C_2$ that satisfy all the above properties and, in addition, it satisfies the following property:
\begin{enumerate}
    \item [(5)] Order the facilities in $C_1 = \{1, \ldots, k\}$ so that $w_1 \geq w_2 \geq  \ldots  \geq w_k$, and let $C_1(\ell) = \{1, 2, \ldots, \ell\}$ be the prefix of the first $\ell$ centers. Then, we always have
        \[
            |C_1(\ell) \setminus C_2| \leq  \left\lceil \sum_{i \in \{1, \ldots, \ell\}} (1-y^*_i)\right\rceil %\qquad  \mbox{}
        \]
        for every prefix $\ell = 1, \ldots, k$.
\end{enumerate} 

Given this distribution, our basic rounding algorithm proceeds by sampling the solution $C_2$ and it defines the assignment $\mu_2$ as follows: for every $j\in P_1$ so that $\mu_1(j) \in C_2$ we let $\mu_2(j) = \mu_1(j)$ and for remaining clients $j$ we let $\mu_2(j)$ equal its closest center in $C_2$. The definition of $\mu_2$ ensures that the switching cost equals $\sum_{i\in C_1 \setminus C_2} w_i$. The bound on the connection cost now follows from Properties $1-4$ by the same arguments as in \cite{CharikarL12} and is argued in the end of the analysis. The novel part is to first argue that we can efficiently sample from a distribution that satisfies properties $1-5$ and that yields the desired bounds on the switching cost. Following the analysis of our basic rounding, we explain, in the end of this section, how the basic rounding scheme is slightly modified to prove the two statements of~\cref{thm:k-median}.

\paragraph*{Efficiently sampling for a distribution satisfying properties $1-5$.}

We formulate a polynomial-sized polytope that expresses the five properties. We then show that this polytope is integral and since it contains $y^*$ we can thus in polynomial time write $y^*$ as a convex solution to integral solutions, and sampling an integral solution with probability that equals its coefficient then gives the desired distribution.

The polytope is  defined by the following constraints:
\begin{itemize}
    \item \(\sum_i y_i \leq k \);
    \item \(\sum_{i\in U_j} y_i  \leq 1\) { for all $j\in P_2'$}; 
    \item \(\sum_{i \in U_j \cup U_{j'}} y_i  \geq 1 \) { for all $j, j' \in P_2'$ that are matched in $M$};
    \item \(\sum_{i \in C_1(\ell)} (1-y_i) \leq \left\lceil \sum_{i \in \{1, \ldots, \ell\}} (1-y^*_i)\right\rceil\) {for every prefix $\ell = 1,\ldots, k$ of $C_1$}; and
    \item \( 0 \leq y_i \leq 1\) for every $i\in P_2$.
\end{itemize}

Now, we are ready to prove that the above polytope is integral.

\begin{lemma}\label{lem:integral}
    Any extreme point to the above polytope is integral. 
\end{lemma}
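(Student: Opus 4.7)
The plan is to recognize the polytope above as the intersection of two matroid polytopes on the ground set $P_2$; Edmonds' matroid intersection theorem then guarantees that all of its vertices are integral. The hint just before the lemma (that ``the fact that the constraints form two matroids is important'') points squarely at this strategy.

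The first matroid $M_1$ should encode the Charikar--Li style bundle, matched-pair, and cardinality constraints. Because the bundles $\{U_j\}_{j\in P_2'}$ are pairwise disjoint and each matched pair is exactly the union of two such bundles, the family $\{U_j\}_{j\in P_2'}\cup\{U_j\cup U_{j'}\}_{\text{matched}}\cup\{P_2\}$ is laminar. I would take $M_1$ to be the laminar matroid obtained by assigning capacities $1$ to each bundle, $2$ to each matched pair-union, and $k$ to $P_2$. The matched-pair \emph{lower} bound $y(U_j\cup U_{j'})\geq 1$ is most cleanly handled by passing to the complementary variables $z=\mathbf 1 - y$: the bundle upper bound becomes a lower bound on $z(U_j)$, and the matching lower bound becomes an upper bound on $z(U_j\cup U_{j'})$, so the whole system becomes a laminar upper/lower bound system that matches the base polytope of $M_1$.

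The second matroid $M_2$ should capture the prefix constraints over $C_1$. Since the prefixes $C_1(1)\subset C_1(2)\subset\cdots\subset C_1(k)$ form a chain (and hence a laminar family), the constraints $\sum_{i\in C_1(\ell)}(1-y_i)\leq \lceil\sum_{i\leq\ell}(1-y_i^*)\rceil$ are exactly upper bounds on $z$ over this chain. I would take $M_2$ to be the associated (chain) laminar matroid on $C_1$, extended trivially to $P_2$ by leaving elements outside $C_1$ unconstrained; its independence polytope is integral in a standard way.

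With the two matroids identified, I would check that the LP feasible region equals the intersection $P(M_1)\cap P(M_2)$ after the $y\leftrightarrow z$ substitution, invoke Edmonds' theorem to obtain integrality of all extreme points, and finally translate back to the $y$-variables. The main obstacle is the first step: it requires verifying carefully that after complementation the Charikar--Li constraints coincide exactly with the laminar matroid polytope of $M_1$ (not a strict sub- or super-polytope), and in particular that the matched-pair lower bounds are captured precisely by the rank function of the laminar matroid. Once that correspondence is nailed down, Edmonds' theorem closes the argument.
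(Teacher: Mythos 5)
Your overall strategy --- reduce to the integrality of a system built from two laminar families, one coming from the bundle/matching/cardinality constraints and one from the prefix constraints --- is the same one the paper uses, and your identification of the two laminar families is correct (the disjointness of the bundles and of the matched pair-unions, plus the chain structure of the prefixes, is exactly what the paper exploits). The gap is in the final step: the first constraint family is \emph{not} a matroid independence polytope or base polytope, in $y$-variables or in $z=\mathbf{1}-y$ variables, so Edmonds' matroid intersection theorem does not apply as you invoke it. In $y$-space you have upper bounds on the $U_j$ and on $P_2$ but a \emph{lower} bound on each matched union $U_j\cup U_{j'}$; complementation flips every inequality, so in $z$-space you get lower bounds on the $U_j$ and on $P_2$ and upper bounds on the pair-unions. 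Either way the system mixes upper and lower bounds over a laminar family. Such a mixed system is a generalized polymatroid, not a base polytope: a base polytope has upper bounds on all proper subsets and an \emph{equality} on the ground set, whereas here there is no ground-set equality and there are lower bounds on proper subsets ($z(U_j)\geq |U_j|-1$). Correspondingly, the integer points of your $M_1$ system are neither downward nor upward closed, so they are not the independent sets (or bases) of any matroid, and the claim that the system ``matches the base polytope of $M_1$'' is where the argument breaks.

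There are two standard ways to close this gap. The first is the paper's route: argue \emph{locally} at an arbitrary extreme point $y'$. The tight constraints at $y'$ are equalities, so the direction of each inequality becomes irrelevant; they form two laminar families with integral right-hand sides (the prefix equalities rewrite as $\sum_{i\le \ell} y_i = \ell - \lceil\cdot\rceil$), together with tight box constraints, and the characterization of extreme points of such systems (Schrijver, Theorems 41.11 and 5.20, i.e.\ the integrality of the laminar matroid intersection polytope) forces $y'$ to be integral. The second is to keep your global formulation but upgrade the tool: the constraint matrix of the union of two laminar (more generally, cross-free) families is a network matrix and hence totally unimodular, so the polytope is integral for \emph{any} mix of integral upper and lower bounds; equivalently, one can invoke the integrality of the intersection of two integral generalized polymatroids rather than of two matroid independence polytopes. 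With either repair your argument goes through; as written, the appeal to Edmonds' theorem is not justified.
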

\begin{proof}
    We will use the integrality of the so-called matroid intersection polytope in the special case of laminar matroids. Recall that a family $\cL$ of sets is \emph{laminar} if for any two sets $A,B \in \cL$ we either have that one is a subset of the other (i.e., $A \subseteq B$ or $B \subseteq A$) or they are disjoint $(A \neq B)$. The characterization of the integral polytope of the intersection of two laminar matroids now implies the following (see e.g., Theorem 41.11 and Theorem 5.20 in~\cite{Schrijver03}): \begin{quote}
        Consider an extreme point $x$ defined by tight constraints
        \begin{align*}
            \sum_{i\in S} x_i &= a_S \qquad S \in \cL_1  \\
            \sum_{i\in S} x_i & = b_S \qquad S \in \cL_2 \\
            0 \leq x_i & \leq 1 \,.
        \end{align*}
        Then $x$ is integral if $\cL_1$ and $\cL_2$ are laminar sets and the right-hand-sides $(a_S)_{S\in \cL_1}, (b_S)_{S\in \cL_2}$ are integral.
    \end{quote}
    Now to use the above fact in our setting, consider an extreme point $y'$ and consider the subset of constraints that are tight.
    First, consider the tight constraints of one of the following types:
    \[
        \sum_{i\in U_j} y_i  = 1, \quad \sum_{i \in U_j \cup U_{j'}} y_i = 1
    \]
    and the constraint $\sum_i y_i = k$ if tight. These form a laminar set $\cL_1$, and every right-hand-side is integral. 
    
    Second, consider the tight constraints of the type
    \[
        \sum_{i \in \{1, \ldots, \ell\}} (1-y_i)  =  \left\lceil \sum_{i \in \{1, \ldots, \ell\}} (1-y^*_i)\right\rceil 
    \]
    which equivalently can be written as
    \[
         \sum_{i\in \{1, \ldots \ell\}} y_i = \ell -  \left\lceil \sum_{i \in C_1(\ell)} (1-y^*_i)\right\rceil\,.
    \]
    These tight constraints again form a laminar set $\cL_2$ with integral right-hand-sides. Combining this with the above-mentioned fact thus implies that $y'$ must be integral. 
\end{proof}
\paragraph*{Bounding the switching cost.} We now show how the fifth property (5) bounds the switching cost.

\begin{lemma}
    The switching cost of any extreme point is at most $S + \max\{w_i \mid i\in C_1,  0 < y^*_i< 1\}$. Furthermore, we can achieve a switching cost of at most $S$ by opening up at most one additional center.
    \label{lem:switching_cost_kmedian}
\end{lemma}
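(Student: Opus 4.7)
My plan is to interpret the switching cost as $\sum_{i \in C_1 \setminus C_2} w_i$: because the assignment $\mu_2$ keeps every $P_1$-point at its original center whenever that center remains open, only the points of $P_1$ whose old center is closed switch labels. Let $z_i = 1$ if $i \notin C_2$ and $z_i = 0$ otherwise, and write $a_i = 1 - y^*_i$. The goal is to bound $\sum_i w_i z_i$ using the LP switching-cost constraint $\sum_i w_i a_i \leq S$ together with the prefix constraints from property~(5): setting $Z_\ell = \sum_{i \leq \ell} z_i$ and $A_\ell = \sum_{i \leq \ell} a_i$, we have $Z_\ell \leq \lceil A_\ell \rceil$ for every $\ell$. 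Since the weights are sorted so that $w_1 \geq w_2 \geq \ldots \geq w_k$, the natural tool for combining many such prefix constraints is summation by parts, giving
\[
\sum_{i=1}^{k} w_i (z_i - a_i) = \sum_{\ell=1}^{k-1}(w_\ell - w_{\ell+1})(Z_\ell - A_\ell) + w_k (Z_k - A_k),
\]
where each $Z_\ell - A_\ell \leq \lceil A_\ell \rceil - A_\ell \in [0,1)$ and each $w_\ell - w_{\ell+1} \geq 0$.

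For the first statement, let $i^* = \min\{i \in C_1 : 0 < y^*_i < 1\}$, with the convention that the bound reduces to $S$ if no fractional index exists. For $\ell < i^*$ every $a_j$ with $j \leq \ell$ is an integer, so $A_\ell \in \mathbb{Z}$ and the corresponding term contributes zero. For $\ell \geq i^*$ the contribution is at most $w_\ell - w_{\ell+1}$ (or $w_k$ for the last term), and the resulting telescoping sum collapses to at most $w_{i^*}$. Since the weights are sorted non-increasingly and $i^*$ is the smallest index with a fractional $y^*$-value, $w_{i^*} = \max\{w_i : i \in C_1, 0 < y^*_i < 1\}$, yielding the claimed bound.

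For the second statement, if $C_1 \subseteq C_2$ the switching cost is already $0$; otherwise let $i_0 = \min(C_1 \setminus C_2)$ and set $C_2' = C_2 \cup \{i_0\}$, which uses at most one additional center and still satisfies property~(5) because every prefix count can only decrease. The new switching cost equals $\sum_i w_i z_i - w_{i_0}$. To bound it by $S$, I rerun the summation-by-parts argument but exploit the sharper observation that $Z_\ell = 0$ for $\ell < i_0$, so those terms are non-positive. The terms with $\ell \geq i_0$ still contribute at most $w_\ell - w_{\ell+1}$ each, telescoping to $w_{i_0}$. Hence $\sum_i w_i z_i \leq S + w_{i_0}$, and subtracting $w_{i_0}$ for the newly opened center gives a final switching cost of at most $S$.

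The main obstacle is spotting that summation by parts is the correct tool: it converts the family of per-prefix constraints in property~(5) into a single telescoping bound, which collapses to a single weight term as soon as one identifies the first index where the slack $\lceil A_\ell \rceil - A_\ell$ can be nonzero (Part~1, starting at $i^*$) or the first index where $Z_\ell$ can be nonzero (Part~2, starting at $i_0$). Once this structural observation is made, the remainder is routine telescoping.
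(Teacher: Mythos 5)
Your proof is correct and takes essentially the same route as the paper's: both convert the family of prefix constraints from property~(5) into a bound on the weighted switching cost via summation by parts (the paper's telescoping sum $\sum_{\ell}\sum_{i\in C_1(\ell)}(1-y_i)(w_\ell - w_{\ell+1})$), and both identify a single first ``critical'' index whose weight accounts for the entire excess over $S$. The only cosmetic difference is that the paper carries out the argument by modifying the extreme point at that index to an auxiliary vector $y''$ whose switching cost is shown to be at most $S$, whereas you bound the Abel-summation slack $Z_\ell - A_\ell \leq \lceil A_\ell\rceil - A_\ell$ directly.
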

\begin{proof}
    For an extreme point $y'$ we have
    \[
    \sum_{i \in C_1(\ell)\}} (1-y'_i) \leq \left\lceil \sum_{i\in C_1(\ell)} (1-y^*_i)\right\rceil %\qquad  \mbox{for every prefix $\{1, \ldots, \ell\}$ of $C_1$} 
    \]
    for every $\ell =1 ,\ldots, k$.
    Let $i_0$ be the first index that satisfies $y'_{i_0} = 0$ and $0 < y^*_{i_0}$. If there is no such index $i_0$ then the switching cost of $y'$ is at most that of $y^*$ and thus at most $S$. 
    Hence, we can focus on the more interesting case when such an index $i_0$ exists and further that we have $y^*_{i_0} < 1$ since otherwise $y'_{i_0}$ must equal $1$. As $y'_i = 1$ whenever $y^*_i > 0$ for $i< i_0$, the selection of $i_0$ and the above inequality implies
    $
     \sum_{i \in C_1(\ell)} (1-y''_i) \leq  \sum_{i \in C_1(\ell)} (1-y^*_i) %\qquad  \mbox{for every prefix $\{1, \ldots, \ell\}$ of $C_1$} 
    $
    for every prefix $\ell = 1,\ldots, k$m 
    where $y''$ is the same vector $y'$  except that we set $y''_{i_0} = 1$
    Now the switching cost $\sum_{i\in C_1} (1-y''_i) w_i $ can be written as the following telescoping sum:
    $
        \sum_{\ell  = 1}^{k} \sum_{i \in C_1(\ell)\}} (1-y''_i) (w_{\ell}  - w_{\ell+1})\,,
    $
    where for notational convenience we let $w_{k+1}  = 0$. Using $\sum_{i \in C_1(\ell)} (1-y''_i) \leq  \sum_{i \in C_1(\ell)} (1-y^*_i)$, we thus have
   \begin{align*} 
        \sum_{\ell  = 1}^{k} \sum_{i \in C_1(\ell)} (1-y''_i) & (w_{\ell}  - w_{\ell+1}) 
         \leq         \sum_{\ell  = 1}^{k} \sum_{i \in C_1(\ell)} (1-y^*_i) (w_{\ell}  - w_{\ell+1}) \\
        & = \sum_{i\in C_1} (1-y^*_i) w_i \leq S\,.
    \end{align*}
    In other words, the switching cost of $y''$ is at most $S$. As the difference between the switching costs of $y'$ and $y''$ is only $i_0$, the lemma follows. 
\end{proof}

\paragraph*{Bounding the connection cost.}
To bound the connection cost, we follow the same arguments as in the proof of  Lemma~4 in~\cite{CharikarL12}. The following shows that the expected connection cost of our mapping $\mu_2$ that either sends a client to the same center as $\mu_1$ or to its closest center is at most $\cost(P_1, \C_1) + 10\cdot \sum_{j\in P_2} \dav(j)$. As the LP is a relaxation, we have $\sum_{j\in P_2} \dav(j) \leq \OPT$ and so we have the bound $\cost(P_2, \C_2)\leq 10\cdot \OPT + \cost(P_1, \C_1)$  on the connection cost.  The proof of the lemma appears in \cref{sec:app-lp-cost}.
\begin{lemma}
  \label{lem:lp-cost}
   The expected distance of a client $j\in P_2$ to its closest center is at most $10\cdot \dav(j)$.
\end{lemma}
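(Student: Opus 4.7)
The plan is to follow the proof of Lemma~4 in~\cite{CharikarL12} essentially verbatim. Our rounding distribution satisfies Properties 1--4, which are exactly the conditions used in that analysis, and Property~5 only imposes additional constraints related to switching costs; it does not change the marginals $\Pr[i\in C_2] = y^*_i$ nor the bundle/matching structure on which the connection-cost argument relies. Hence the CL connection-cost analysis transfers with the same constants.

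I would split by whether $j$ is a representative client. First, suppose $j \in P'_2$. By Property~2, the probability that $U_j$ contains an opened center equals $\vol(U_j)$, which by the bundle property is at least $1 - \dav(j)/R_j$. Conditioned on this event, $d(j, C_2) \le R_j$ since $U_j \subseteq \ball(j, R_j)$; moreover, because the bundle can be constructed from the fractionally closest facilities to $j$ within $\ball(j, R_j)$, the conditional expected distance is bounded using $\sum_{i\in U_j} y^*_i\, d(i,j) \le \dav(j)$ (Markov's inequality applied to the LP opening fractions, together with $y^*_i \ge x^*_{ij}$). In the complementary event, of probability at most $\dav(j)/R_j$, Property~4 forces the matched bundle $U_{j'}$ to contain an opened center, so by triangle inequality $d(j, C_2) \le d(j,j') + R_{j'}$; combining $R_{j'} \le d(j,j')/2$ with the filtering constraint $d(j,j') \ge 4\max(\dav(j), \dav(j'))$ and the greedy matching bound $d(j,j') = O(R_j)$ yields an $O(\dav(j))$ contribution. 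The possibly unmatched client in $P'_2$ (when $|P'_2|$ is odd) is handled separately as in~\cite{CharikarL12}.

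Second, if $j \in P_2 \setminus P'_2$, filtering property~(ii) furnishes $j^* \in P'_2$ with $\dav(j^*) \le \dav(j)$ and $d(j, j^*) \le 4\dav(j)$. Applying the previous case to $j^*$ and using the triangle inequality gives $\E[d(j, C_2)] \le d(j,j^*) + \E[d(j^*, C_2)] = O(\dav(j))$. Tracking the constants through~\cite{CharikarL12}'s analysis yields the claimed bound $10\cdot \dav(j)$.

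The main potential obstacle is ensuring that the additional Property~5 does not disrupt the marginals and the bundle/matching guarantees that the CL analysis depends on. This is exactly what~\cref{lem:integral} addresses: the polytope capturing all five properties is integral (being the intersection of two laminar-matroid polytopes), so $y^*$ can be decomposed into a convex combination of integral vertices. Sampling each integral vertex with probability equal to its coefficient thus produces a distribution over $C_2$ that satisfies Properties 1--5 \emph{and} preserves the marginals $\Pr[i\in C_2] = y^*_i$, making the CL connection-cost bound directly applicable.
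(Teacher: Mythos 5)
Your overall strategy matches the paper's proof exactly: Properties 1--4 are the only ones used for the connection cost, Property 5 does not disturb the marginals or the bundle/matching structure (this is guaranteed by the integral decomposition of \cref{lem:integral}), and the argument is then the case analysis of Lemma~4 in~\cite{CharikarL12}, giving $6\dav(j)$ for $j\in P_2'$ and $4\dav(j')+6\dav(j)\le 10\dav(j')$ for $j'\notin P_2'$.

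One step in your sketch, as literally written, would fail. In the complementary event for $j\in P_2'$ you route the bound through $j$'s \emph{own} matched partner $j'$ and invoke ``the greedy matching bound $d(j,j')=O(R_j)$.'' This is false in general: if $j$'s nearest neighbor $j_2$ in $P_2'$ is matched to a third client $j_3$ before $j$ is matched, then $j$'s eventual partner can be arbitrarily far away relative to $R_j$ (e.g., four collinear clients at $0,1,2,100$: the client at $2$ has $R=1/2$ but is matched to the client at $100$). The correct argument -- the one in the paper and in~\cite{CharikarL12} -- does not use $j$'s partner in this sub-case at all; it uses the pair $(j_2,j_3)$: since the greedy matching chose $(j_2,j_3)$ while $j$ was still unmatched, $d(j_2,j_3)\le d(j,j_2)=2R_j$, so the guaranteed open center in $U_{j_2}\cup U_{j_3}$ is within $d(j,j_2)+d(j_2,j_3)+\max\{R_{j_2},R_{j_3}\}\le 5R_j$ of $j$, yielding the $6\dav(j)$ bound. (This also subsumes the odd-$|P_2'|$ unmatched client, which you proposed to handle separately.) Since you explicitly defer to the CL analysis for the constants, this is a fixable imprecision rather than a flaw in the approach, but the final write-up must take the detour through the nearest neighbor's matched pair rather than through $j$'s own partner.
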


\paragraph*{Proof of~\cref{thm:k-median}.}
\cref{lem:switching_cost_kmedian} imply that we can exactly satisfy the switching cost with $k+1$ opened centers. Opening up one more center can only decrease the switching cost and so we maintain the bound $\cost(P_2, \C_2)\leq 10\cdot \OPT + \cost(P_1, \C_1)$. This  implies the first part of \cref{thm:k-median}. For the second part, we proceed as follows. We let $H = \{ i\in C_1 \mid w_i \geq \varepsilon S\}$ be all ``heavy'' centers in the first solution. Note that any solution that satisfies the switching cost can close at most $\lfloor 1/\varepsilon \rfloor$ many such centers. We can thus in time $O(|C_1|^{\lfloor 1/\varepsilon \rfloor}) = O(|P_2|^{1/\varepsilon})$ guess the heavy centers that are closed by the optimal solution. For each guess, we solve the LP relaxation where we set $y_i = 1$ for all $i\in H$ that were not closed and $y_i = 0$ for all closed $i\in H$. We then run our rounding algorithm that produces a solution of switching cost of at most $(1+\varepsilon)S$ by \cref{lem:switching_cost_kmedian}. Among all the solutions created, we return the one with the smallest connection cost. For the guess of closed centers in $H$ that agree with the optimal solution, our analysis of the connection cost guarantees $\cost(P_2, \C_2)\leq 10\cdot \OPT + \cost(P_1, \C_1)$. In particular, this holds for the returned solution of the lowest cost. We have thus in polynomial time (for every fixed $\varepsilon$) constructed a solution of switching cost at most $(1+\varepsilon)S$ and connection cost $\cost(P_2, \C_2)\leq 10\cdot \OPT + \cost(P_1, \C_1)$.  This yields the second part of \cref{thm:k-median}.

\section{Integrality gap}\label{sec:lpintegralitygap}
In this section, we establish an unbounded integrality gap of the standard LP formulation of the label consistent $k$-median problem when one opens at most $k$ centers and must satisfy the switching cost budget, thus establishing~\cref{lem:integrality}. Consider the following instance: 

\begin{itemize}
    \item $C_1$ consists of $k$ centers with pairwise distance $1$ and $P_1$ consists of $Mk$, each center in $C_1$ has $M$ co-located points.
    \item $P_2$ contains, in addition to the points in $P_1$, $2$ points that are infinitely far from the centers in $C_1$ and at distance $D$ from each other.
    \item The allowed switching cost is at most $S = 2M-1$.
\end{itemize}

By construction, any solution that satisfies the switching cost must open $k-1$ centers in $C_1$, and thus, only one center will serve the two points in $P_2 \setminus P_1$. This leads to a connection cost of at least $D$.

On the other hand, a fractional solution that sets $y_i =  (k-2)/k + 1/(Mk) $ for $i\in C_1$ will satisfy the fractional switching cost of the LP since
\[
 \sum_{i\in C_1} (1-y_i) M = kM - kM \cdot ((k-2)/k + 1/MK) = 2M - 1\,.
\]
We can thus fractionally open up two centers, each with fractional mass  $(2-1/M)/2$, that is co-located with the two points in $P_2 \setminus P_1$. This leads to a fractional connection cost of $D/M + k M$, where we naively upper bound the connection cost of the points in $P_1$ by $1$. Letting $D$ and $M$ tend to $\infty$ now yields the unbounded integrality gap.
\section{Proof of Lemma~\ref{lem:lp-cost}}
\label{sec:app-lp-cost}

\begin{proof}
    We first consider the case when $j \in P_2'$. Let $j_2$ be the client in $P_2'\setminus \{j\}$  that is closest to $j$. 
    
    Suppose that $j$ is matched to $j_2$. We have that $R_j = d(j,j_2)/2\}$.
    We have that with probability $\vol(U_j)$, a random center in the ball $\ball(j, R_j)$ is opened which has expected connection cost at most $\dav(j)$. With the remaining probability, we are guaranteed that a center in $U_{j_2}$ is opened, which is at distance at most $d(j,j_2) + R_{j_2}  \leq 3 R_j$ as $R_{j_2} \leq R_j$. As $\vol(U_j) \geq 1- \dav(j)/R_j$ we thus get that the expected distance to the closest center in this case it at most $4 \dav(j)$. 
    
    Now suppose that $j$ is not matched to $j_2$. Then $j_2$ must have been matched to a third center $j_3$ before $j$ was matched. By greedy construction of the matching, we thus have $2 \max\{R_{j_2}, R_{j_3}\} \leq d(j_2, j_3) \leq d(j_1, j_2) = 2R_j$.
    Similarly to the previous case, we have that the connection cost of $j$ is at most $\dav(j)$ with probability $\vol(U_j)$. Otherwise, we use the fact that at least one center is opened in $U_{j_2} \cup U_{j_3}$. Any such center is within distance $d(j, j_2) + d(j_2, j_3) + \max\{R_{j_2}, R_{j_3}\} \leq 5R_j $ of $j$. As $\vol(U_j) \geq 1-\dav(j)/R_j$, we have that the expected distance to $j$'s closest center in this case is at most $6 \dav(j)$.
    
    It follows that any $j\in P_2'$ has expected distance to its closest center is  at most $6 \dav(j)$. Now for a client $j' \in P_2 \setminus P_2'$, there is a client $j \in P_2'$ so that $d(j',j)\leq 4 \dav(j')$ and $\dav(j) \leq \dav(j')$. Hence, by the triangle inequality, the expected distance to the closest center for $j'$ is at most $4 \dav(j') + 6 \dav(j) \leq 10 \dav(j')$, which completes the proof of the lemma.
\end{proof}
%\input{app-experiments}

%\newpage 
%\input{checklist}

\end{document}